\documentclass[11pt,a4paper]{article}

\usepackage[utf8]{inputenc}
\usepackage[T1]{fontenc}
\usepackage{lmodern}
\usepackage{microtype}

\usepackage{amsmath,amssymb,amsthm}
\usepackage{mathtools}
\usepackage{thmtools}
\usepackage{bm}

\usepackage{hyperref}
\usepackage{cleveref}
\usepackage{csquotes}

\usepackage[margin=1in]{geometry}
\usepackage{enumitem}
\usepackage{xcolor}

\usepackage{algorithm}
\usepackage{algpseudocode}

\hypersetup{
  colorlinks=true,
  linkcolor=blue,
  citecolor=blue,
  urlcolor=blue,
  pdftitle={Algorithmic Barriers to Detecting and Repairing Structural Overspecification in Adaptive Data-Structure Selection},
  pdfauthor={Faruk Alpay; Levent Sarioglu},
  pdfkeywords={data structures, graph algorithms, string algorithms, dynamic data structures, algorithm selection, decidability, fixed point, program transformation, Bradley-Terry-Luce}
}

\newtheorem{theorem}{Theorem}[section]
\newtheorem{lemma}[theorem]{Lemma}
\newtheorem{corollary}[theorem]{Corollary}
\newtheorem{proposition}[theorem]{Proposition}

\theoremstyle{definition}
\newtheorem{definition}[theorem]{Definition}

\theoremstyle{remark}
\newtheorem{remark}[theorem]{Remark}

\newcommand{\Sig}{\Sigma}
\newcommand{\Str}{\Sig^{\ast}}

\newcommand{\Fclass}{\mathcal{F}}

\newcommand{\Rreward}{R}
\newcommand{\Smodel}{S}
\newcommand{\Warrant}{W}
\newcommand{\Vrel}{V}
\newcommand{\valscore}{v}

\newcommand{\N}{\mathbb{N}}

\newcommand{\R}{\mathbb{R}}

\newcommand{\KL}{\mathrm{KL}}

\DeclareMathOperator{\Prb}{Pr}

\title{
  Algorithmic Barriers to Detecting and Repairing\\
  Structural Overspecification in Adaptive Data-Structure Selection
}

\author{
  Faruk Alpay\thanks{
    Department of Computer Engineering, Bahcesehir University.
    \texttt{faruk.alpay@bahcesehir.edu.tr}
  }
  \and
  Levent Sarioglu\thanks{
    Department of Computer Engineering, Bahcesehir University.
    \texttt{levent.sarioglu@bahcesehir.edu.tr}
  }
}

\date{}

\begin{document}
\maketitle

\begin{abstract}
We study algorithmic barriers to detecting and repairing a systematic form of structural overspecification in adaptive data-structure selection.
An input instance induces an implied workload signature---ordering, sparsity, dynamism, locality, or substring structure---and candidate implementations may be preferred because they match that full signature even when the measured workload evidence supports only a strict subset of it.
Working under a model in which pairwise evaluators favor implementations that realize the implied signature, we show that this preference propagates through both benchmark aggregation and Bradley--Terry--Luce fitting.
We then establish two results.
First, a sharp decidability boundary: determining whether a representation-selection pipeline exhibits structural commitment beyond measured warrant is undecidable on unbounded input domains (by reduction from the halting problem) but decidable with exponential enumeration cost on finite domains.
Second, a fixed-point barrier: under a conservative program-transformation constraint---the repair operator must leave already evidence-aligned pipelines unchanged---any total computable repair operator admits an overspecified fixed point, constructed via Kleene's recursion theorem.
These barriers are qualitatively distinct from classical lower bounds in data-structure design: cell-probe and dynamic-graph lower bounds constrain the efficiency of answering operations on finite workloads, while our results constrain the possibility of uniformly detecting and repairing overspecification across pipeline families.
\end{abstract}

\noindent\textbf{Keywords:}
data structures;
graph algorithms;
string algorithms;
dynamic data structures;
algorithm selection;
Bradley--Terry--Luce;
decidability;
Rice's theorem;
Kleene recursion theorem;
conservative program transformation

\medskip

\section{Introduction}
\label{sec:intro}

Selecting an implementation or representation for a workload---choosing between adjacency lists and adjacency matrices, balanced search trees and hashing, suffix arrays and suffix trees, or static and dynamic graph primitives---is a central task in data structures and algorithm design~\cite{cormen2009,sleator1983,holm2001,ukkonen1995}.
Modern systems increasingly make such choices from traces, benchmark outcomes, or learned cost models, often using pairwise comparisons between candidate implementations and then aggregating those outcomes into a single score or ranking~\cite{rice1976,bradley1952,mcgeoch2012,kraska2018}.

The complexity of data-structure design is well studied:
dynamic data structures admit update/query trade-offs~\cite{fredman1989},
dynamic graph algorithms have finely characterized complexity frontiers~\cite{holm2001,demetrescu2004},
and string indexing structures achieve different preprocessing/query regimes~\cite{kmp1977,manber1993,ukkonen1995}.
These results are \emph{complexity barriers}: they bound the efficiency of supporting a given set of operations on a fixed finite instance or operation sequence.

This paper identifies a different type of barrier---a \emph{computability barrier}---that arises when the representation-selection process is systematically overspecified.
Specifically, we formalize a structural-preference channel:
when an input instance suggests a workload signature, evaluators may prefer implementations that realize the full implied signature even when the measured evidence only warrants a smaller subset.
For example, a sparse graph workload may be mapped to aggressively dynamic graph machinery without evidence of adversarial updates; a string-processing task may trigger heavy suffix-based indexing from weak locality cues; a dictionary workload may be assigned ordered search structures from incidental sortedness hints.
Under standard pairwise aggregation models, this preference provably propagates into the learned scoring function (Propositions~\ref{prop:social_inheritance_profile} and~\ref{prop:btl_inheritance}).
The algorithmic questions are then: can one \emph{detect} such overspecification, and can one \emph{repair} it uniformly?

\paragraph{Our contributions.}
We establish two main results, each with explicit scope audits.
\begin{enumerate}[label=(\roman*),leftmargin=*]
\item \textbf{Decidability boundary (Section~\ref{sec:computability}).}
The problem of deciding whether a representation-selection pipeline exhibits structural commitment beyond measured warrant is undecidable on unbounded input domains (Theorem~\ref{thm:rice_fixed}), via reduction from the halting problem.
On finite domains, the same predicate is decidable by exhaustive enumeration at exponential cost (Proposition~\ref{prop:finite_domain_decidable}).
This is a sharp computability boundary.

\item \textbf{Fixed-point barrier (Section~\ref{sec:correction}).}
Under a conservative constraint on program transformers---the repair operator must leave evidence-aligned pipelines unchanged---any total computable repair operator admits an overspecified fixed point (Theorem~\ref{thm:biased_fixed_point}), constructed via Kleene's recursion theorem.
No conservative repair operator can uniformly eliminate overspecification across all pipelines.
\end{enumerate}

These barriers are qualitatively distinct from classical data-structure lower bounds:
cell-probe and dynamic-graph lower bounds constrain time and space trade-offs on finite workloads;
our results are computability barriers that no amount of computation can overcome on unbounded pipeline families.

\paragraph{Paper structure.}
Section~\ref{sec:framework} introduces the formal framework.
Sections~\ref{sec:bias_model_inheritance} and~\ref{sec:bias_model_asymmetry} specify the overspecification model:
they establish that unwarranted structural preference propagates through benchmark aggregation and does so asymmetrically.
These are preconditions, not contributions---they formalize the problem that our main results address.
Sections~\ref{sec:computability} and~\ref{sec:correction} contain the main algorithmic contributions.

\subsection{Related work}
\label{sec:related}

\paragraph{Adaptive data structures and algorithm selection.}
Rice~\cite{rice1976} formulated the algorithm selection problem, emphasizing that the choice of algorithm depends on instance features and performance models.
Workload-sensitive and self-adjusting structures, such as dynamic trees~\cite{sleator1983} and splay trees~\cite{sleator1985}, show that representation matters structurally.
Learned and benchmark-driven data systems make this dependence even more explicit~\cite{mcgeoch2012,kraska2018}.
Our paper identifies a different barrier: not efficient adaptation, but \emph{undecidability} of detecting unwarranted structural commitment and \emph{incompleteness} of conservative repair.

\paragraph{Graph algorithms and dynamic data structures.}
Dynamic connectivity, shortest paths, and dynamic tree representations exhibit rich update/query trade-offs~\cite{sleator1983,holm2001,demetrescu2004}.
This literature studies efficient support for changing workloads, whereas our focus is meta-algorithmic: whether a selection pipeline can be uniformly audited for structural overspecification.

\paragraph{String algorithms and indexing structures.}
Classical string-processing results such as Knuth--Morris--Pratt~\cite{kmp1977}, suffix arrays~\cite{manber1993}, and suffix trees~\cite{ukkonen1995} demonstrate multiple competing representation regimes for the same underlying task.
Our framework treats such alternatives as candidates in a representation-selection pipeline and asks when a selector commits to too much structure relative to evidence.

\paragraph{Program repair and computability.}
Consistent query answering~\cite{arenas1999,bertossi2011} studies querying inconsistent data under minimal repairs.
Rice's theorem~\cite{rice1953} and fixed-point constructions via the recursion theorem underpin the computability barriers used here.
Our conservative repair operator (Definition~\ref{def:conservative}) is the program-transformation analogue of minimal repair.

\section{Formal framework}
\label{sec:framework}

\subsection{Instances, implementations, and computable selection pipelines}

Fix a finite alphabet $\Sig$.
Let $\Str=\Sig^\ast$ denote the set of all finite strings over $\Sig$.
We model workload instances as strings $x\in\Str$ and candidate implementations or representations as strings $y\in\Str$.

\begin{definition}[Representation-selection pipeline class]
\label{def:gen_class}
A \emph{representation-selection pipeline class} $\Fclass$ is a recursively enumerable set of total computable functions
$f:\Str\to\Str$.
Equivalently, there exists an effective enumeration $\{f_e\}_{e\in\N}$
such that $\Fclass=\{f_e:e\in\N\}$ and each $f_e$ halts on every input.
\end{definition}

\begin{remark}
\label{rem:finite_domain_decidable}
Restricting the domain to $\Sig^{\le n}$ makes it finite and many existential predicates decidable.
This is critical for the decidability boundary in Section~\ref{sec:computability}.
\end{remark}

\subsection{Workload signatures, measured warrant, and structural compatibility}

\begin{definition}[Workload-signature extractor]
\label{def:self_model}
A \emph{workload-signature extractor} is a total computable function
$\Smodel:\Str\to 2^{\Str}$
mapping each instance $x$ to a finite set $\Smodel(x)$ of structural feature strings.
These features may encode properties such as sorted access, sparsity, online updates, locality, or substring reuse.
\end{definition}

\begin{definition}[Measured-warrant extractor]
\label{def:warrant}
A \emph{measured-warrant extractor} is a total computable function
$\Warrant:\Str\to 2^{\Str}$
such that $\Warrant(x)\subseteq \Smodel(x)$ for all $x$,
with $\Warrant(x)$ interpreted as the structural features actually supported by the observed trace, benchmark, or specification.
\end{definition}

\begin{definition}[Structural compatibility relation and scores]
\label{def:validation_relation}
A \emph{structural compatibility relation} is a total computable function
$\Vrel:\Str\times\Str\to\{-1,0,+1\}$.
The \emph{structural compatibility score} is
$\valscore(x,y)=\sum_{s\in\Smodel(x)} \Vrel(y,s)$.
The \emph{beyond-warrant overspecification score} is
\begin{equation}
\label{eq:beyond_warrant}
\valscore_{\mathrm{bw}}(x,y)=\sum_{s\in \Smodel(x)\setminus \Warrant(x)} \Vrel(y,s).
\end{equation}
A positive value indicates that implementation $y$ instantiates structure suggested by the full signature of $x$ but not warranted by measured evidence.
\end{definition}

\begin{remark}
\label{rem:two_notions}
Any property of the form ``property of the input/output behavior of a total computable function''
falls under Rice-style limits on unbounded domains (Section~\ref{sec:computability}).
\end{remark}

\subsection{Benchmark aggregation and pairwise score fitting}

\begin{definition}[Benchmark profile and implementation aggregation]
\label{def:social_aggregation}
Fix $k$ evaluators, an instance $x$, and candidate set $Y_x$.
A \emph{benchmark profile} is $\pi=(\succ_1,\dots,\succ_k)$, where each $\succ_i$ ranks candidate implementations.
An \emph{implementation aggregation rule} $A$ maps such profiles to aggregate orders.
\end{definition}

\begin{definition}[Decisive benchmark family]
\label{def:decisive_pair}
$C\subseteq[k]$ is \emph{decisive for $(y,y')$}
if $(\forall i\in C:y\succ_i y') \Rightarrow y\succ_A y'$,
regardless of the preferences of the remaining evaluators.
\end{definition}

\begin{definition}[Logistic pairwise-implementation likelihood]
\label{def:logistic_likelihood}
Let $\sigma(t)=\frac{1}{1+e^{-t}}$.
A scoring function $\Rreward:\Str\times\Str\to\R$ induces
$\Prb[y\succ y'\mid x]=\sigma(\Rreward(x,y)-\Rreward(x,y'))$.
Bradley--Terry--Luce fitting provides a standard pairwise scoring model for such outcomes~\cite{bradley1952,hunter2004}.
\end{definition}

\section{Overspecification model: Inheritance through benchmark aggregation}
\label{sec:bias_model_inheritance}

This section specifies the structural overspecification model under which our algorithmic results operate.
The propositions here are not our main contributions; they formalize the precondition that unwarranted structural preference \emph{does} propagate through standard benchmark aggregation, establishing that detection and repair are problems worth studying.

\subsection{Deterministic inheritance under decisiveness}

\begin{definition}[Signature-monotone evaluator]
\label{def:validation_monotone_pair}
An evaluator ranking $\succ_i$ is \emph{signature-monotone on $(y,y')$} if
$\valscore(x,y)>\valscore(x,y') \Rightarrow y\succ_i y'$.
\end{definition}

\begin{proposition}[Inheritance under decisive benchmark families]
\label{prop:social_inheritance_profile}
Fix $x$, $Y_x$ with $|Y_x|\ge 3$, and an implementation aggregation rule $A$.
If there exists a family $C$ decisive for $(y,y')$
with every $i\in C$ signature-monotone on $(y,y')$,
then for every benchmark profile $\pi$:
$\valscore(x,y)>\valscore(x,y') \Rightarrow y\succ_A y'$.
\end{proposition}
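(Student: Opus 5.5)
The argument is a direct chaining of the two hypotheses, so the plan is to unfold Definitions~\ref{def:decisive_pair} and~\ref{def:validation_monotone_pair} and compose them pointwise for an arbitrary profile. Fix a benchmark profile $\pi=(\succ_1,\dots,\succ_k)$ over $Y_x$ and suppose $\valscore(x,y)>\valscore(x,y')$. The score $\valscore(x,\cdot)$ is determined by $\Smodel(x)$ and $\Vrel$ alone (Definition~\ref{def:validation_relation}) and so does not depend on $\pi$. The strict inequality is therefore a fixed fact about the pair $(y,y')$ that every evaluator ranking is tested against.

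First, I apply signature-monotonicity. For each $i\in C$, the ranking $\succ_i$ is signature-monotone on $(y,y')$, so the hypothesis $\valscore(x,y)>\valscore(x,y')$ yields $y\succ_i y'$. This gives the unanimity premise $\forall i\in C:\ y\succ_i y'$. Second, I invoke decisiveness of $C$ for $(y,y')$. That unanimity premise implies $y\succ_A y'$ no matter how the evaluators outside $C$ rank the candidates. In particular it holds for the given $\pi$. Since $\pi$ was arbitrary, the implication holds for every profile.

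The only point needing care, and the one I would flag as the main obstacle, is the quantifier structure. Signature-monotonicity is a property of each ranking $\succ_i$, and therefore of each profile. So the hypothesis ``every $i\in C$ is signature-monotone'' must be read as holding for the $\succ_i$ appearing in every profile $\pi$ under consideration, i.e.\ restricting to profiles in which the members of $C$ are signature-monotone. I would state this reading explicitly at the start of the proof. Once it is fixed, the proof goes through without further assumptions. The condition $|Y_x|\ge 3$ plays no role in the derivation; it serves only to match the Arrow-style setting in which decisive families are typically obtained, and I would note this in a remark rather than use it.
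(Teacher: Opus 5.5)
Your proof is correct and follows the paper's argument exactly: signature-monotonicity gives $y\succ_i y'$ for every $i\in C$, and decisiveness of $C$ then gives $y\succ_A y'$. Your remarks that monotonicity must hold for the rankings in each profile considered, and that $|Y_x|\ge 3$ is never used, are accurate clarifications the paper leaves implicit.
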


\begin{proof}
Signature-monotonicity forces each $i\in C$ to rank $y\succ_i y'$; decisiveness forces $y\succ_A y'$.
\end{proof}

\subsection{Statistical inheritance under pairwise score fitting}

\begin{definition}[Random-utility evaluators]
\label{def:random_utility}
For evaluator $i$, latent utility is
$u_i(x,y)=q_i(x,y)+\alpha_i\,\valscore(x,y)$,
where $q_i$ is structural-preference-independent and $\alpha_i\in\R$ is signature sensitivity.
Pairwise outcomes follow $\Prb_i[y\succ y'\mid x]=\sigma(u_i(x,y)-u_i(x,y'))$.
\end{definition}

\begin{proposition}[Pairwise-fit inheritance under quality control]
\label{prop:btl_inheritance}
Assume Definition~\ref{def:random_utility} with $q_i(x,y)=q_i(x,y')$ for all $i$,
$\alpha_i\ge 0$, and $\bar{\alpha}>0$.
Let $\Delta v=\valscore(x,y)-\valscore(x,y')$.
If $\Delta v>0$, then
$\Prb[y\succ y'\mid x]>\tfrac{1}{2}$
and the probability is strictly increasing in $\Delta v$.
\end{proposition}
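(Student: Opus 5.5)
The plan is to make precise what $\Prb[y\succ y'\mid x]$ denotes at the aggregate level, and then reduce everything to elementary properties of $\sigma$. Definition~\ref{def:random_utility} only specifies per-evaluator probabilities $\Prb_i$. I will therefore read the aggregate probability as the pooled comparison probability obtained when the evaluator is drawn uniformly from $[k]$ (or from any fixed distribution with positive weights, in which case $\bar\alpha$ is the corresponding weighted mean):
\[
\Prb[y\succ y'\mid x]=\frac1k\sum_{i=1}^k \Prb_i[y\succ y'\mid x].
\]
This pooled probability is exactly what a population-level Bradley--Terry--Luce fit (Definition~\ref{def:logistic_likelihood}) is asked to match on the pair $(y,y')$.

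First, under the quality-control assumption $q_i(x,y)=q_i(x,y')$, the structural-preference-independent terms cancel. This gives $u_i(x,y)-u_i(x,y')=\alpha_i\,\Delta v$ and hence $\Prb_i[y\succ y'\mid x]=\sigma(\alpha_i\Delta v)$.

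Second, I use that $\sigma$ is strictly increasing with $\sigma(0)=\tfrac12$. Since $\alpha_i\ge 0$ and $\Delta v>0$, each term satisfies $\sigma(\alpha_i\Delta v)\ge\tfrac12$, with strict inequality exactly when $\alpha_i>0$. The hypothesis $\bar\alpha>0$, together with nonnegativity of the $\alpha_i$, guarantees that at least one index $j$ has $\alpha_j>0$. Averaging then yields $\Prb[y\succ y'\mid x]>\tfrac12$.

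Third, for monotonicity in $\Delta v$, I regard the pooled probability as the function $g(t)=\frac1k\sum_i\sigma(\alpha_i t)$ on $t>0$. Each summand is nondecreasing in $t$: it is constant when $\alpha_i=0$ and strictly increasing when $\alpha_i>0$. The summand for the index $j$ is strictly increasing, so $g$ is strictly increasing. Equivalently, $g'(t)=\frac1k\sum_i\alpha_i\sigma(\alpha_i t)(1-\sigma(\alpha_i t))>0$ because the $j$-th term is positive.

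Finally, I will add the translation to the fitted model. If $\Rreward$ is fitted so that $\sigma(\Rreward(x,y)-\Rreward(x,y'))$ equals the pooled probability, then $\Rreward(x,y)-\Rreward(x,y')=\sigma^{-1}(g(\Delta v))$. This difference is positive and strictly increasing in $\Delta v$, since $\sigma^{-1}$ is strictly increasing; so the inheritance carries over to the learned score.

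The only real obstacle is interpretive rather than technical. The statement leaves implicit how the per-evaluator probabilities combine and what $\bar\alpha$ averages over. I will fix the uniform (or positively weighted) mixture explicitly. The step that needs care is noting that $\bar\alpha>0$ together with $\alpha_i\ge0$ forces some $\alpha_j>0$, since this is what gives strictness in both conclusions. A mere sum of nonnegative terms would give only weak inequalities.
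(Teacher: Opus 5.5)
Your proposal is correct and takes essentially the same route as the paper's proof: the paper also writes $\Prb[y\succ y'\mid x]=\frac{1}{k}\sum_i\sigma(\alpha_i\Delta v)$, uses $\sigma(\alpha_i\Delta v)\ge\tfrac12$ with strict inequality for some $\alpha_{i_0}>0$, and proves strict monotonicity by termwise differentiation. Your additions are not departures from that argument: you spell out that $\bar\alpha>0$ together with $\alpha_i\ge0$ forces some $\alpha_j>0$, and your last step, translating the result to the fitted score, is what the paper does separately in Corollary~\ref{cor:consistency_to_reward}.
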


\begin{proof}
Under quality control,
$\Prb[y\succ y'\mid x]=\frac{1}{k}\sum_i\sigma(\alpha_i\Delta v)$.
Each term $\ge\tfrac12$; at least one is strict. Averaging gives $>\tfrac12$.
Strict monotonicity:
$\frac{d}{d\Delta v}\sigma(\alpha_i\Delta v)=\alpha_i\sigma(\alpha_i\Delta v)(1-\sigma(\alpha_i\Delta v))\ge 0$,
strict for $\alpha_{i_0}>0$.
\end{proof}

\begin{corollary}[Learned scores inherit structural overspecification]
\label{cor:consistency_to_reward}
Under Proposition~\ref{prop:btl_inheritance}'s conditions,
any consistent pairwise score estimator satisfies
$\Delta v>0 \Rightarrow \widehat{R}(x,y)>\widehat{R}(x,y')$
in the large-sample limit.
\end{corollary}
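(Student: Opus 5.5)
The plan is to combine Proposition~\ref{prop:btl_inheritance} with the defining property of a consistent estimator. Under the stated conditions, the true data-generating pairwise probability satisfies
\[
p^\ast(y,y'\mid x):=\Prb[y\succ y'\mid x]=\tfrac1k\sum_i\sigma(\alpha_i\Delta v)>\tfrac12
\]
whenever $\Delta v>0$. The task is then to show that any consistent pairwise score estimator $\widehat R$ eventually orders $y$ above $y'$.

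\textbf{Step 1: fix consistency.} I will take ``consistent'' to mean that, as the number of observed comparisons on $(x,y,y')$ grows, the fitted pairwise probability converges to the true one in probability (or almost surely):
\[
\sigma\bigl(\widehat R(x,y)-\widehat R(x,y')\bigr)\to p^\ast(y,y'\mid x).
\]
This is the natural notion for the logistic model of Definition~\ref{def:logistic_likelihood}. The mixture of evaluators need not itself be exactly BTL, so I phrase consistency at the level of pairwise probabilities, which the BTL MLE achieves for the marginal win rate on a single pair. Equivalently, since $\sigma$ is a continuous bijection onto $(0,1)$ with continuous inverse $\sigma^{-1}(p)=\log\frac{p}{1-p}$, the score difference converges to $\sigma^{-1}(p^\ast)$.

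\textbf{Step 2: positivity.} Because $\sigma^{-1}$ is strictly increasing with $\sigma^{-1}(\tfrac12)=0$, the limit $\delta:=\sigma^{-1}(p^\ast)$ is strictly positive by Proposition~\ref{prop:btl_inheritance}. It is also finite, since each $\sigma(\alpha_i\Delta v)<1$ gives $p^\ast<1$.

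\textbf{Step 3: pass to the limit.} Convergence to $\delta>0$ gives
\[
\Prb\bigl[\widehat R(x,y)-\widehat R(x,y')>\delta/2\bigr]\to 1,
\]
so $\widehat R(x,y)>\widehat R(x,y')$ with probability tending to one, which is the large-sample statement. The strict monotonicity in $\Delta v$ from Proposition~\ref{prop:btl_inheritance} additionally shows that the limiting gap $\delta$ grows with $\Delta v$.

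\textbf{Main obstacle.} The difficulty is interpretive rather than technical. I need to pin down ``consistent'' so that it targets the true pairwise probability, and not the parameters of a misspecified global BTL model whose maximizer might reorder items across many pairs. If a global fit is intended, I would instead restrict to the quality-controlled setting, where all $q_i$ agree on the relevant candidates. I would then observe that the population log-likelihood is maximized at a score vector reproducing the pairwise marginals, and argue as above.
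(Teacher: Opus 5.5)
Your argument is correct and is essentially the paper's proof: consistency gives $\widehat p\to p^\star>\tfrac12$, and the inverse logistic link turns this into a strictly positive, finite limiting score gap; you just spell out the continuity of $\sigma^{-1}$ and the convergence-in-probability step. One caution about your fallback for a global fit: when the $\alpha_i$ differ, the mixture law $\tfrac1k\sum_i\sigma\bigl(\alpha_i(v(x,a)-v(x,b))\bigr)$ is generally not itself BTL on three or more candidates, so the population log-likelihood maximizer (the $\KL$ projection of Proposition~\ref{prop:btl_misspec}) matches only expected win counts, not every pairwise marginal, and preserving the order of $y,y'$ there needs a further argument (e.g.\ quality control on all candidates together with a complete, balanced comparison design).
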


\begin{proof}
Consistency gives $\widehat p\to p^\star>\tfrac12$; the logistic link converts this to a positive score difference.
\end{proof}

\begin{remark}
\label{rem:where_fails_btl}
Proposition~\ref{prop:btl_inheritance} can fail under mixed-sign $\alpha_i$, mismeasured workload traces, insufficient pair coverage, or explicit regularizers penalizing structural complexity.
\end{remark}

\section{Overspecification model: Asymmetric underprovision penalties}
\label{sec:bias_model_asymmetry}

The model is completed by noting that missing structure and adding extra structure need not be penalized symmetrically.
In algorithm engineering, underprovision is often treated as more costly than overprovision:
failing to provide dynamic updates or locality support can be penalized more heavily than carrying redundant machinery.
This asymmetry helps explain why repair is harder than detection:
a repair operator must undo a direction-dependent distortion, not merely subtract a constant.

\begin{definition}[Piecewise-linear regret weighting]
\label{def:prospect_weight}
Evaluator $i$ uses $w_i(\delta)=\alpha_i\delta$ for $\delta\ge 0$ and $w_i(\delta)=-\lambda_i\alpha_i|\delta|$ for $\delta<0$,
where $\alpha_i>0$ and $\lambda_i\ge 1$~\cite{tversky1992,kahneman1979}.
With $u_i(x,y)=q_i(x,y)+w_i(\valscore(x,y))$,
define the population score $\Rreward(x,y)=\frac{1}{k}\sum_i u_i(x,y)$.
\end{definition}

\begin{proposition}[Sensitivity-weighted underprovision ratio]
\label{prop:asymmetry_weighted}
Let $y^+,y^0,y^-$ have structural compatibility scores $+\delta,0,-\delta$ with equal baseline quality.
Define $\lambda_{\mathrm{eff}}:=\sum_i\lambda_i\alpha_i/\sum_i\alpha_i$.
Then
$\Rreward(x,y^0)-\Rreward(x,y^-)
=\lambda_{\mathrm{eff}}\cdot(\Rreward(x,y^+)-\Rreward(x,y^0))$,
with $\min_i\lambda_i\le\lambda_{\mathrm{eff}}\le\max_i\lambda_i$.
\end{proposition}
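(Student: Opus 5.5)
The plan is a direct computation from Definition~\ref{def:prospect_weight}. Equal baseline quality means that for each evaluator $i$ the three values $q_i(x,y^+)$, $q_i(x,y^0)$ and $q_i(x,y^-)$ coincide. So in each difference $u_i(x,y)-u_i(x,y')$ the quality terms cancel, and only the weighting terms $w_i(\valscore(x,\cdot))$ remain. Because $\Rreward$ is the average $\frac1k\sum_i u_i$, every score difference equals $\frac1k\sum_i$ of the corresponding differences of $w_i$.

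The first step evaluates $w_i$ at the three scores $+\delta$, $0$ and $-\delta$, taking $\delta>0$ implicitly; if $\delta=0$ both sides vanish and the identity is trivial. Then $w_i(0)=0$, $w_i(\delta)=\alpha_i\delta$, and $w_i(-\delta)=-\lambda_i\alpha_i\delta$. This gives
\[
\Rreward(x,y^+)-\Rreward(x,y^0)=\frac{\delta}{k}\sum_i\alpha_i,
\qquad
\Rreward(x,y^0)-\Rreward(x,y^-)=\frac{\delta}{k}\sum_i\lambda_i\alpha_i .
\]
The first expression is strictly positive since every $\alpha_i>0$. Dividing the second by it yields exactly $\lambda_{\mathrm{eff}}=\sum_i\lambda_i\alpha_i/\sum_i\alpha_i$, which is the claimed identity.

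The second step proves the bounds. $\lambda_{\mathrm{eff}}$ is a convex combination of the $\lambda_i$, with weights $\alpha_i/\sum_j\alpha_j$ that are positive and sum to one. Any convex combination lies between the minimum and the maximum of the combined values, so $\min_i\lambda_i\le\lambda_{\mathrm{eff}}\le\max_i\lambda_i$.

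There is no real obstacle here. The only points to watch are reading ``equal baseline quality'' per evaluator, so that the $q_i$ cancel, and checking that $\sum_i\alpha_i\neq0$, which follows from $\alpha_i>0$.
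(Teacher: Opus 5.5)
Your proposal is correct and is essentially the paper's proof. You cancel the equal baseline qualities, compute the two score gaps as $\frac{\delta}{k}\sum_i\alpha_i$ and $\frac{\delta}{k}\sum_i\lambda_i\alpha_i$, and read off $\lambda_{\mathrm{eff}}$ as a convex combination of the $\lambda_i$; your explicit notes that $\delta\ge 0$ is needed and that $\sum_i\alpha_i>0$ are small, welcome additions the paper leaves implicit.
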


\begin{proof}
Under equal baseline quality:
$\Rreward(x,y^+)-\Rreward(x,y^0)=\frac{1}{k}\sum_i\alpha_i\delta$
and
$\Rreward(x,y^0)-\Rreward(x,y^-)=\frac{1}{k}\sum_i\lambda_i\alpha_i\delta$.
Dividing gives $\lambda_{\mathrm{eff}}$, a convex combination of $\lambda_i$.
\end{proof}

\section{Main result I: Decidability boundary for overspecification detection}
\label{sec:computability}

We now present our first main result: a sharp decidability boundary for detecting structural overspecification in representation-selection pipelines.

\subsection{The overspecification predicate}

\begin{definition}[Beyond-warrant overspecification predicate]
\label{def:bias_predicate_bw}
For a total computable pipeline $f:\Str\to\Str$,
\begin{equation}
\label{eq:bias_pred}
B_{\mathrm{bw}}(f)=
\begin{cases}
1, & \exists x\in\Str:\;\valscore_{\mathrm{bw}}(x,f(x))>0,\\
0, & \text{otherwise.}
\end{cases}
\end{equation}
\end{definition}

\begin{definition}[Index-set form]
\label{def:bias_index_set}
Fix an effective enumeration $\{f_e\}_{e\in\N}$ of total computable pipelines.
The corresponding index set is
\[
\mathcal{B}_{\mathrm{bw}}:=\{e\in\N: B_{\mathrm{bw}}(f_e)=1\}.
\]
\end{definition}

\begin{lemma}[Extensionality]
\label{lem:bw_extensional}
If $f,g:\Str\to\Str$ are total computable and $f(x)=g(x)$ for all $x\in\Str$,
then $B_{\mathrm{bw}}(f)=B_{\mathrm{bw}}(g)$.
Equivalently, if $f_e=f_{e'}$ extensionally, then
$e\in\mathcal{B}_{\mathrm{bw}} \Leftrightarrow e'\in\mathcal{B}_{\mathrm{bw}}$.
\end{lemma}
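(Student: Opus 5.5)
The plan is to argue directly from the definition of $B_{\mathrm{bw}}$ in \eqref{eq:bias_pred}. The key observation is that the predicate depends on $f$ only through the graph $\{(x,f(x)):x\in\Str\}$. The extractors $\Smodel$ and $\Warrant$, the relation $\Vrel$, and hence the score $\valscore_{\mathrm{bw}}$ of \eqref{eq:beyond_warrant} are fixed functions of $(x,y)$ alone. They do not depend on which program computes the pipeline.

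First I fix total computable $f,g$ with $f(x)=g(x)$ for all $x$. For each $x\in\Str$ I substitute $y=f(x)=g(x)$ into \eqref{eq:beyond_warrant}. This gives
\[
\valscore_{\mathrm{bw}}(x,f(x))=\sum_{s\in\Smodel(x)\setminus\Warrant(x)}\Vrel(f(x),s)=\sum_{s\in\Smodel(x)\setminus\Warrant(x)}\Vrel(g(x),s)=\valscore_{\mathrm{bw}}(x,g(x)).
\]
The sum is over a finite set, since $\Smodel(x)$ is finite by Definition~\ref{def:self_model}, so it is well defined. It follows that the sets $\{x:\valscore_{\mathrm{bw}}(x,f(x))>0\}$ and $\{x:\valscore_{\mathrm{bw}}(x,g(x))>0\}$ coincide. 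In particular, one is nonempty exactly when the other is, and so $B_{\mathrm{bw}}(f)=B_{\mathrm{bw}}(g)$ by the case split in \eqref{eq:bias_pred}.

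For the index-set form, I take $e,e'$ with $f_e=f_{e'}$ extensionally and apply the first part to $f=f_e$, $g=f_{e'}$. Then $e\in\mathcal{B}_{\mathrm{bw}}$ iff $B_{\mathrm{bw}}(f_e)=1$ iff $B_{\mathrm{bw}}(f_{e'})=1$ iff $e'\in\mathcal{B}_{\mathrm{bw}}$, by Definition~\ref{def:bias_index_set}.

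There is no real obstacle; the argument is pure substitution. The only point to state explicitly is that $\Smodel$, $\Warrant$ and $\Vrel$ are held fixed independently of the pipeline. This is what makes $\mathcal{B}_{\mathrm{bw}}$ a semantic (Rice-style) index set, which is the property needed later in Section~\ref{sec:computability}.
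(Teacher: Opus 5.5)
Your proof is correct and takes the same approach as the paper: substitute $f(x)=g(x)$ pointwise into $\valscore_{\mathrm{bw}}$, so the existential condition in \eqref{eq:bias_pred} is the same for $f$ and $g$. Your explicit derivation of the index-set form via Definition~\ref{def:bias_index_set} simply spells out what the paper states as ``equivalently.''
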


\begin{proof}
By pointwise equality, for every $x$ we have
$\valscore_{\mathrm{bw}}(x,f(x))=\valscore_{\mathrm{bw}}(x,g(x))$.
Hence the existential condition in~\eqref{eq:bias_pred} is identical for $f$ and $g$.
\end{proof}

\begin{lemma}[Non-triviality]
\label{lem:nontrivial_bw}
Assume there exist $x_0\in\Str$, $s_0\in\Str$, $y^+\in\Str$, and a padding symbol $\#\in\Sig$
such that for every $n\in\N$,
\[
s_0\in\Smodel(x_0\#^n)\setminus\Warrant(x_0\#^n)
\quad\text{and}\quad
\Vrel(y^+,s_0)=+1.
\]
Then $B_{\mathrm{bw}}$ is non-trivial:
$f_1\equiv y^+$ gives $B_{\mathrm{bw}}(f_1)=1$;
$f_2\equiv\epsilon$ (with $\Vrel(\epsilon,s)=0$ for all $s$) gives $B_{\mathrm{bw}}(f_2)=0$.
\end{lemma}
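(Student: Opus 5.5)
The plan is to exhibit one witness for each value of the predicate: a single input where $f_1$ fires, and a uniform argument that $f_2$ never fires. Both functions are constant, so they are trivially total computable and can be placed in any pipeline class $\Fclass$ satisfying Definition~\ref{def:gen_class}. Hence they are legitimate arguments of $B_{\mathrm{bw}}$, and non-triviality refers to two genuine members of the class.

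\textbf{The case $f_1$.} Take the witness $x=x_0$ (the $n=0$ instance of $x_0\#^n$). Expand $\valscore_{\mathrm{bw}}(x_0,y^+)$ via~\eqref{eq:beyond_warrant} as a sum over $s\in\Smodel(x_0)\setminus\Warrant(x_0)$. The hypothesis places $s_0$ in this set with $\Vrel(y^+,s_0)=+1$. The main obstacle is that the other summands could be negative, since $\Vrel$ takes values in $\{-1,0,+1\}$, and so the sum need not be positive. As stated, the hypothesis only controls the $s_0$ term.

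To close this gap, I would first read the hypothesis in the intended sense: $y^+$ is compatible with $s_0$ and neutral on the remaining beyond-warrant features of $x_0$, so that $\Vrel(y^+,s)\ge 0$ for every $s\in\Smodel(x_0)\setminus\Warrant(x_0)$. Under that reading the sum is at least $1>0$. If this reading is not available, I would fall back on the following repair. The padding hypothesis supplies infinitely many instances $x_0\#^n$, all of which carry $s_0$ beyond warrant. I would choose $y^+$, or strengthen the assumption, so that on at least one such instance the $+1$ contribution of $s_0$ dominates. In either version, the existential in~\eqref{eq:bias_pred} is witnessed, and $B_{\mathrm{bw}}(f_1)=1$.

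\textbf{The case $f_2$.} Fix an arbitrary $x$. Then $\valscore_{\mathrm{bw}}(x,\epsilon)=\sum_{s\in\Smodel(x)\setminus\Warrant(x)}\Vrel(\epsilon,s)$, and every term is $0$ by assumption. The sum is therefore $0$, which is not $>0$. Since $x$ was arbitrary, no witness exists, and $B_{\mathrm{bw}}(f_2)=0$.

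Finally, I would note that the padding family $x_0\#^n$ is not needed for non-triviality itself. Its role comes later: it keeps the witness available on unbounded domains for the halting reduction in Theorem~\ref{thm:rice_fixed}. Here it serves only as a source of witnesses.
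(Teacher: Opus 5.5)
Your route is the paper's: witness $x_0$ for $f_1$, and termwise vanishing for $f_2$. The obstacle you raise is real, and it is exactly the step the paper's proof passes over. The paper writes $\sum_{s\in\Smodel(x_0)\setminus\Warrant(x_0)}\Vrel(y^+,s)\ge\Vrel(y^+,s_0)$, which needs every other beyond-warrant summand to be nonnegative, and nothing in the hypothesis gives that.

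In fact the lemma is false as stated. Take distinct strings $s_0,s_1$ and a nonempty $y^+$. Let $\Smodel(x)=\{s_0,s_1\}$ when $x=x_0\#^n$ for some $n$, and $\Smodel(x)=\emptyset$ otherwise. Let $\Warrant\equiv\emptyset$, and set $\Vrel(y^+,s_0)=+1$, $\Vrel(y^+,s_1)=-1$, and $\Vrel=0$ on all other pairs. Everything is total computable, the hypothesis holds for every $n$, and $\Vrel(\epsilon,\cdot)\equiv 0$. Yet $\valscore_{\mathrm{bw}}(x,y)=0$ for all $x,y$, so $B_{\mathrm{bw}}(f_1)=0$ and $B_{\mathrm{bw}}$ is identically zero.

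So no argument closes the gap without changing the hypothesis. Your reading (a), namely $\Vrel(y^+,s)\ge 0$ on the beyond-warrant set, is the right repair, and it is the assumption the paper's inequality tacitly uses. With it, your proof is complete.

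Your fallback (b) does not work. First, $y^+$ is fixed by the hypothesis and is not yours to choose. Second, having infinitely many padded instances gives no control over the other summands at any of them; in the example above they cancel at every $n$. ``Strengthen so that $s_0$ dominates'' is just another added hypothesis.

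Drop (b), and state the strengthened hypothesis at every padded instance, not only at $x_0$. For example, require $\Vrel(y^+,s)\ge 0$ for all $s\in\Smodel(x_0\#^n)\setminus\Warrant(x_0\#^n)$ and all $n$. The downstream results need this. Theorem~\ref{thm:rice_fixed} invokes the same inequality at $x_0\#^n$ for arbitrarily large $n$, and Theorem~\ref{thm:biased_fixed_point} invokes it at a padded instance. In the example above both theorems fail: a trivial predicate is decidable, and conservativeness forces $\Phi$ to be the identity, which has no overspecified fixed point.
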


\begin{proof}
For $f_1\equiv y^+$, evaluate at $x_0$:
\[
\valscore_{\mathrm{bw}}(x_0,f_1(x_0))
=\sum_{s\in\Smodel(x_0)\setminus\Warrant(x_0)}\Vrel(y^+,s)
\ge \Vrel(y^+,s_0)=1>0,
\]
so $B_{\mathrm{bw}}(f_1)=1$.
For $f_2\equiv\epsilon$, by assumption $\Vrel(\epsilon,s)=0$ for all $s$, hence
$\valscore_{\mathrm{bw}}(x,f_2(x))=0$ for every $x$, so $B_{\mathrm{bw}}(f_2)=0$.
\end{proof}

\begin{remark}
\label{rem:padding_condition}
The padding condition in Lemma~\ref{lem:nontrivial_bw} is a technical device
for the halting reduction: it provides an unbounded \enquote{clock} family
$x_0\#^n$ while preserving a fixed beyond-warrant witness $s_0$.
It holds whenever $\Smodel$ and $\Warrant$ are invariant under this padding.
\end{remark}

\subsection{Undecidability on unbounded domains}

\begin{theorem}[Undecidability of overspecification detection]
\label{thm:rice_fixed}
Under the conditions of Lemma~\ref{lem:nontrivial_bw},
there is no Turing machine that decides $B_{\mathrm{bw}}(f)$ for all total computable $f$.
\end{theorem}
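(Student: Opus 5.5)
The argument is a direct many-one reduction from the halting problem, using the padding family $x_0\#^n$ of Lemma~\ref{lem:nontrivial_bw} as a clock. Given an index $M$ of a Turing machine (run on empty input), we effectively construct a total computable pipeline $f_M:\Str\to\Str$ as follows. On input $x$, first test whether $x=x_0\#^n$ for some $n$; this is a syntactic check and is decidable. If it is not, output $\epsilon$. If it is, simulate $M$ for $n$ steps. If $M$ has halted within $n$ steps, output $y^+$; otherwise output $\epsilon$. Each branch runs for a bounded number of steps, so $f_M$ is total. By the s-m-n theorem the map $M\mapsto f_M$ is computable, and it yields a program for a total function, as Definition~\ref{def:gen_class} requires.

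Next we verify the equivalence $B_{\mathrm{bw}}(f_M)=1 \iff M$ halts.

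If $M$ halts in $t$ steps, take $x=x_0\#^t$. Then $f_M(x)=y^+$. By hypothesis $s_0\in\Smodel(x)\setminus\Warrant(x)$ and $\Vrel(y^+,s_0)=+1$, so the sum in~\eqref{eq:beyond_warrant} picks up a $+1$ term.

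One subtlety needs handling here: the other summands over $\Smodel(x)\setminus\Warrant(x)$ could be $-1$. The proof of Lemma~\ref{lem:nontrivial_bw} uses the inequality $\ge\Vrel(y^+,s_0)$, which implicitly assumes those summands are nonnegative. I will make this explicit: strengthen the hypothesis to $\valscore_{\mathrm{bw}}(x_0\#^n,y^+)>0$ for all $n$. This is exactly the content used in Lemma~\ref{lem:nontrivial_bw} at $n=0$, and it holds, for instance, when $\Vrel(y^+,\cdot)\ge 0$. With it, $\valscore_{\mathrm{bw}}(x,f_M(x))>0$, and hence $B_{\mathrm{bw}}(f_M)=1$.

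Conversely, if $M$ never halts, then $f_M\equiv\epsilon$. Since $\Vrel(\epsilon,s)=0$ for all $s$, every beyond-warrant score is $0$ and $B_{\mathrm{bw}}(f_M)=0$.

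Now suppose some Turing machine $D$ decided $B_{\mathrm{bw}}(f)$ for every total computable $f$, given a program for $f$. Then $M\mapsto D(f_M)$ would decide the halting problem, which is a contradiction.

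The main obstacle is not the reduction itself but fixing the input format of the decider. A decider ``for all total computable $f$'' must receive some finite code. I will take codes to be indices in an acceptable numbering that happen to compute total functions, or equivalently indices in the enumeration of Definition~\ref{def:bias_index_set} when $\Fclass$ contains all the $f_M$. Lemma~\ref{lem:bw_extensional} guarantees that the answer does not depend on which code of $f_M$ is presented. This is why the reduction goes directly through halting rather than through Rice's theorem verbatim: Rice's theorem concerns partial functions, and the family $\{f_M\}$ consists only of total ones.
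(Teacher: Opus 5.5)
Your proof is correct and follows the paper's route: the same padding-clock reduction from the halting problem, where the paper simulates $M(w)$ for $n^2$ steps and you for $n$, which makes no difference. Your strengthened hypothesis $\valscore_{\mathrm{bw}}(x_0\#^n,y^+)>0$ is genuinely necessary, not cosmetic: the paper's step $\sum_{s}\Vrel(y^+,s)\ge\Vrel(y^+,s_0)$ has exactly the gap you spotted, and taking $\Smodel(x)\setminus\Warrant(x)=\{s_0,s_1\}$ with $\Vrel(y,s_0)=+1$ and $\Vrel(y,s_1)=-1$ for all $y\ne\epsilon$ (and $\Vrel=0$ otherwise) satisfies the literal hypotheses of Lemma~\ref{lem:nontrivial_bw}, yet gives $B_{\mathrm{bw}}\equiv 0$, which is trivially decidable.
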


\begin{proof}
Let $H=\{\langle M,w\rangle: M\text{ halts on input }w\}$.
Fix witnesses $x_0,s_0,y^+,\epsilon,\#$ from Lemma~\ref{lem:nontrivial_bw}.

For each pair $\langle M,w\rangle$, define a total computable function $f_{M,w}:\Str\to\Str$ by
\[
f_{M,w}(x)=
\begin{cases}
y^+, & x=x_0\#^n\text{ for some }n\in\N\text{ and }M(w)\text{ halts within }n^2\text{ steps},\\
\epsilon, & \text{otherwise.}
\end{cases}
\]
Totality is immediate because bounded simulation to $n^2$ steps always halts.
By the $s$-$m$-$n$ theorem, there is a total computable map
$r:\langle M,w\rangle\mapsto e_{M,w}$ such that $f_{e_{M,w}}=f_{M,w}$.

We now verify correctness of the reduction.

\emph{If $\langle M,w\rangle\in H$:}
suppose $M(w)$ halts in exactly $t$ steps.
Choose $n\ge\lceil\sqrt t\rceil$; then $n^2\ge t$, hence
$f_{M,w}(x_0\#^n)=y^+$.
Therefore
\[
\valscore_{\mathrm{bw}}(x_0\#^n,f_{M,w}(x_0\#^n))
=\sum_{s\in\Smodel(x_0\#^n)\setminus\Warrant(x_0\#^n)}\Vrel(y^+,s)
\ge\Vrel(y^+,s_0)=1>0,
\]
so $B_{\mathrm{bw}}(f_{M,w})=1$.

\emph{If $\langle M,w\rangle\notin H$:}
no bounded simulation ever observes halting, so $f_{M,w}(x)=\epsilon$ for all $x$.
Hence $\valscore_{\mathrm{bw}}(x,f_{M,w}(x))=0$ for all $x$, and $B_{\mathrm{bw}}(f_{M,w})=0$.

Thus,
\[
\langle M,w\rangle\in H
\iff B_{\mathrm{bw}}(f_{M,w})=1
\iff e_{M,w}\in\mathcal{B}_{\mathrm{bw}}.
\]
If $B_{\mathrm{bw}}$ were decidable on all total computable pipelines,
then $H$ would be decidable, contradiction.
\end{proof}

\begin{remark}
\label{rem:rice_alternative}
Alternatively, $B_{\mathrm{bw}}$ is a non-trivial semantic property of total computable functions, so undecidability follows from Rice's theorem~\cite{rice1953}.
The explicit reduction additionally shows $B_{\mathrm{bw}}$ is $\Sigma_1^0$-hard.
\end{remark}

\begin{proposition}[Semi-decidability of positive instances]
\label{prop:semi_decidable_positive}
Under the promise that input indices denote total computable pipelines,
the language $\mathcal{B}_{\mathrm{bw}}$ is recursively enumerable.
\end{proposition}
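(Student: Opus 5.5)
The plan is to exhibit a dovetailing semi-decision procedure that, given an index $e$ promised to denote a total computable pipeline, halts and accepts exactly when $B_{\mathrm{bw}}(f_e)=1$. The starting point is the existential shape of \eqref{eq:bias_pred}: membership $e\in\mathcal{B}_{\mathrm{bw}}$ asserts the existence of a single witness $x\in\Str$ with $\valscore_{\mathrm{bw}}(x,f_e(x))>0$. So it suffices to show that the matrix predicate $P(e,x):\iff \valscore_{\mathrm{bw}}(x,f_e(x))>0$ is decidable under the promise. Then $\mathcal{B}_{\mathrm{bw}}=\{e:\exists x\,P(e,x)\}$ is $\Sigma^0_1$, hence recursively enumerable.

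First I will verify that $P$ is decidable on promised inputs. Given $(e,x)$, run $f_e$ on $x$. By the promise that $f_e$ is total this halts, yielding $y=f_e(x)$. Since $\Smodel$ and $\Warrant$ are total computable and return finite sets (Definitions~\ref{def:self_model} and~\ref{def:warrant}), we can compute $\Smodel(x)$ and $\Warrant(x)$ explicitly as finite lists and form the finite difference $\Smodel(x)\setminus\Warrant(x)$. For each of its finitely many elements $s$ we evaluate the total computable $\Vrel(y,s)\in\{-1,0,+1\}$. Summing these values gives $\valscore_{\mathrm{bw}}(x,y)$ exactly, as in \eqref{eq:beyond_warrant}, and we compare it with $0$. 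Every step terminates, so $P$ is decidable.

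Second, I will build the semi-decider. Fix an effective enumeration $x_0,x_1,x_2,\dots$ of $\Str$, for example in length-lexicographic order. On input $e$, the procedure evaluates $P(e,x_0),P(e,x_1),\dots$ in turn and accepts at the first $j$ with $P(e,x_j)$ true. If $e\in\mathcal{B}_{\mathrm{bw}}$, some witness $x_j$ exists and is reached after finitely many decidable steps, so the procedure accepts. If $e\notin\mathcal{B}_{\mathrm{bw}}$, it never accepts. This is exactly recursive enumerability. Equivalently, one can enumerate all pairs $(e,j)$ and output $e$ whenever $P(e,x_j)$ holds.

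The only delicate point is the role of the promise. If some $f_e$ diverged on some $x_j$, the sequential search would hang at $x_j$ and never reach later witnesses. Under the promise this cannot happen. Even without it, the fix would be to dovetail over pairs $(j,t)$, running $f_e(x_j)$ for $t$ steps, which again gives a $\Sigma^0_1$ description. I will note this as a remark. Combined with Theorem~\ref{thm:rice_fixed}, the result places $\mathcal{B}_{\mathrm{bw}}$ as r.e.\ but not decidable, so its complement is not r.e.
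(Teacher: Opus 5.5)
Your proposal is correct and follows essentially the paper's argument: search an effective enumeration of $\Str$ for a witness $x_j$ with $\valscore_{\mathrm{bw}}(x_j,f_e(x_j))>0$, and accept when one is found. The paper dovetails $t$-step simulations of $f_e(x_j)$, which would work even without totality; you instead run each $f_e(x_j)$ to completion using the promise and note dovetailing as the fix without the promise, and you also spell out why the matrix predicate is decidable, which the paper leaves implicit.
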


\begin{proof}
Fix an effective enumeration $(x_j)_{j\in\N}$ of $\Str$.
Given an index $e$, run a dovetailing procedure over stage $t=0,1,2,\dots$:
for each $j\le t$, simulate $f_e(x_j)$ for $t$ steps.
Whenever a simulation halts with output $y$, compute $\valscore_{\mathrm{bw}}(x_j,y)$.
If the value is positive, accept.

If $e\in\mathcal{B}_{\mathrm{bw}}$, there exists some witness $x_j$ with
$\valscore_{\mathrm{bw}}(x_j,f_e(x_j))>0$;
since $f_e$ is total, that computation halts in finite time,
so the dovetailer eventually accepts.
If $e\notin\mathcal{B}_{\mathrm{bw}}$, no accepting witness exists, so the machine runs forever.
Hence $\mathcal{B}_{\mathrm{bw}}$ is recursively enumerable.
\end{proof}

\subsection{Decidability on finite domains}

\begin{proposition}[Finite-domain decidability with exponential cost]
\label{prop:finite_domain_decidable}
Fix maximum input length $n$ and define
$\mathcal{D}_n:=\Sig^{\le n}$,
$N_n:=|\mathcal{D}_n|=\sum_{j=0}^n|\Sig|^j$.
Let $T_f(n)$ be the worst-case time to evaluate $f(x)$ for $|x|\le n$,
and $T_{\mathrm{bw}}(n)$ the worst-case time to evaluate
$\valscore_{\mathrm{bw}}(x,y)$ for $|x|\le n$ and $y=f(x)$.
Then $B_{\mathrm{bw}}(f)$ on $\mathcal{D}_n$ is decidable in
\[
O\bigl(N_n\cdot(T_f(n)+T_{\mathrm{bw}}(n))\bigr)
=O\bigl(|\Sig|^n\cdot(T_f(n)+T_{\mathrm{bw}}(n))\bigr).
\]
\end{proposition}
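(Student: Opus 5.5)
The plan is to give the obvious exhaustive-enumeration algorithm and then bound its running time by counting inputs. The restricted predicate is
\[
B_{\mathrm{bw}}^{(n)}(f)=1 \iff \exists x\in\mathcal{D}_n:\ \valscore_{\mathrm{bw}}(x,f(x))>0 .
\]
Because the quantifier now ranges over the finite set $\mathcal{D}_n$, the existential can be replaced by a bounded loop. The decider enumerates $\mathcal{D}_n$ in a fixed effective order, for instance length-lexicographic. For each $x$ it computes $y=f(x)$, which halts because $f$ is total (Definition~\ref{def:gen_class}). It then computes $\valscore_{\mathrm{bw}}(x,y)$, which is effective because $\Smodel$ and $\Warrant$ are total computable with finite outputs and $\Vrel$ is total computable; this means $\Smodel(x)\setminus\Warrant(x)$ is a finite, computable set and the sum in~\eqref{eq:beyond_warrant} is a finite computation. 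The decider outputs $1$ as soon as some value is positive, and $0$ once the loop finishes.

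Correctness then follows in both directions. If the decider outputs $1$, the $x$ at which it stopped is a witness. If a witness exists, the loop reaches it because every element of $\mathcal{D}_n$ is visited. Termination holds because there are finitely many iterations and each iteration halts.

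For the cost, each iteration takes at most $T_f(n)+T_{\mathrm{bw}}(n)$ time by the definitions of these worst-case quantities, plus the overhead of producing the next string. Summing over the $N_n$ iterations gives $O(N_n\cdot(T_f(n)+T_{\mathrm{bw}}(n)))$. To reach the second form, I will use the geometric sum: for $|\Sig|\ge 2$,
\[
N_n=\frac{|\Sig|^{n+1}-1}{|\Sig|-1}\le \frac{|\Sig|}{|\Sig|-1}\,|\Sig|^n\le 2|\Sig|^n,
\]
so $N_n=O(|\Sig|^n)$. The degenerate case $|\Sig|=1$ gives $N_n=n+1$ and has to be handled separately. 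Either I absorb it by writing the bound as $O(\max(|\Sig|^n,n+1))$, or I note that the padding assumption in the surrounding section already presupposes a nontrivial alphabet.

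The main obstacle is accounting for the enumeration overhead. Producing the next string costs $O(n)$ time, so to absorb it I will argue that $T_f(n)=\Omega(n)$, since $f$ must at least read its input in the worst case; alternatively I will state the bound with an additive $O(nN_n)$ term that is dominated. Everything else in the argument is routine.
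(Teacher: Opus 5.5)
Your proposal is correct and is essentially the paper's proof: enumerate $\mathcal{D}_n$, compute $f(x)$ and $\valscore_{\mathrm{bw}}(x,f(x))$ at each point, and multiply the per-iteration cost by $N_n$. Your observation that the second form $O(|\Sig|^n\cdot(\cdots))$ needs $|\Sig|\ge 2$ (since $N_n=n+1$ for a unary alphabet) is a real point the paper glosses over, but two of your side remarks are wrong. First, the padding hypothesis does not force $|\Sig|\ge 2$: a unary alphabet $\Sig=\{\#\}$ with $x_0=\epsilon$ satisfies it for suitable $\Smodel,\Warrant,\Vrel$, and this proposition does not assume it anyway, so use your $\max$ form or assume $|\Sig|\ge 2$ explicitly. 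Second, $T_f(n)=\Omega(n)$ fails for constant pipelines such as $f\equiv y^+$, so either keep the enumeration overhead as an explicit additive term or note that length-lexicographic successors can be generated in $O(1)$ amortized time.
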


\begin{proof}
Because $\mathcal{D}_n$ is finite, exhaustive search terminates.
Algorithm:
enumerate all $x\in\mathcal{D}_n$;
for each $x$, compute $y=f(x)$ and then
$\delta_x:=\valscore_{\mathrm{bw}}(x,y)$.
Return $1$ iff some $\delta_x>0$.

Correctness follows directly from Definition~\ref{def:bias_predicate_bw} with domain restricted to $\mathcal{D}_n$.
Running time is at most $N_n$ iterations, each costing
$O(T_f(n)+T_{\mathrm{bw}}(n))$.
\end{proof}

\begin{remark}
\label{rem:finite_domain_interpretation}
Theorem~\ref{thm:rice_fixed} and Proposition~\ref{prop:finite_domain_decidable} establish a sharp boundary:
overspecification detection is decidable iff the input domain is finite.
This is structurally different from cell-probe or amortized lower bounds for dynamic data structures~\cite{fredman1989,sleator1985}: those are complexity barriers on finite workloads, while our result is a computability barrier on selector families.
\end{remark}

\section{Main result II: Fixed-point barrier to conservative repair}
\label{sec:correction}

A trivial repair operator can force $\valscore=0$ everywhere by outputting a generic baseline implementation, but this collapses the selection pipeline.
We formalize a \emph{conservative} constraint and show that overspecified fixed points necessarily exist.

\subsection{Conservative program transformers}

\begin{definition}[Computable repair operator]
\label{def:phi_index}
A \emph{repair operator} is a total computable function $\Phi:\N\to\N$
mapping pipeline indices to pipeline indices.
\end{definition}

\begin{definition}[Conservativeness]
\label{def:conservative}
$\Phi$ is \emph{conservative} if
$B_{\mathrm{bw}}(f_e)=0 \Rightarrow \Phi(e)=e$.
\end{definition}

\begin{remark}
\label{rem:conservative_strength}
Conservativeness encodes \enquote{do not modify an already evidence-aligned program}---the program-transformation analogue of minimal repair in consistent query answering~\cite{arenas1999,bertossi2011}.
\end{remark}

\begin{definition}[Uniform elimination guarantee]
\label{def:uniform_elimination}
A repair operator $\Phi$ has \emph{uniform elimination} if
\[
\forall e\in\N,\quad B_{\mathrm{bw}}(f_{\Phi(e)})=0.
\]
\end{definition}

\subsection{Overspecified fixed points via Kleene's recursion theorem}

\begin{lemma}[Self-referential gadget family]
\label{lem:gadget_family}
Fix a total computable repair operator $\Phi$ and witnesses $x_0,s_0,y^+,\epsilon,\#$ from Lemma~\ref{lem:nontrivial_bw}.
Then there exists a total computable map $G:\N\to\N$ such that, for every $e\in\N$,
\[
f_{G(e)}(x)=
\begin{cases}
y^+, & \Phi(e)=e\text{ and }x=x_0\#^n\text{ for some }n\in\N,\\
\epsilon, & \text{otherwise.}
\end{cases}
\]
\end{lemma}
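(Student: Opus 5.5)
The plan is to build the two-argument function first and then apply the $s$-$m$-$n$ theorem, as in the proof of Theorem~\ref{thm:rice_fixed}. I would define
\[
h(e,x)=
\begin{cases}
y^+, & \Phi(e)=e\text{ and }x=x_0\#^n\text{ for some }n\in\N,\\
\epsilon, & \text{otherwise,}
\end{cases}
\]
and check that $h$ is total computable. There are two tests to perform.

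The first test computes $\Phi(e)$ and compares it with $e$. Since $\Phi$ is a total computable function $\N\to\N$ (Definition~\ref{def:phi_index}), this computation always halts and equality of naturals is decidable. The second test asks whether $x$ has the form $x_0\#^n$. This is decidable by a finite syntactic check: test whether $x_0$ is a prefix of $x$ and whether the remaining suffix consists only of the symbol $\#$. Both tests terminate, so $h$ is total.

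Next I would invoke the $s$-$m$-$n$ theorem to get a total computable $G$ with $\varphi_{G(e)}(x)=h(e,x)$ in the standard acceptable numbering. The enumeration $\{f_e\}$ of Definition~\ref{def:gen_class} is only assumed effective and ranges over total functions, so one must make sure the resulting index is an index \emph{in that enumeration}. This is the main obstacle I expect. I would handle it as the paper already does in Theorem~\ref{thm:rice_fixed}: $\{f_e\}$ is treated as the effective enumeration in which $s$-$m$-$n$ applies, so $G$ maps into it.

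If one wants more care, the fix is concrete. Each $h(e,\cdot)$ is total and takes only the two values $y^+,\epsilon$. For these functions an index in $\{f_e\}$ can be computed from $e$ directly: first decide whether $\Phi(e)=e$, which is computable. Then output a fixed index of either the function \enquote{$y^+$ on the padding family, $\epsilon$ elsewhere} or the constant-$\epsilon$ function. Both functions belong to the class, and $f_2\equiv\epsilon$ already appears in Lemma~\ref{lem:nontrivial_bw}. This case split is itself computable in $e$, which gives a total computable $G$ with no appeal to general $s$-$m$-$n$ and no acceptability assumption on the enumeration.

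To finish, I would verify the displayed case distinction pointwise from the definition of $h$.
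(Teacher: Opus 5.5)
Your proposal is correct and matches the paper's proof. You define the two-argument function, note that it is total because $\Phi$ is total and membership in $\{x_0\#^n : n\in\N\}$ is a linear scan, and apply $s$-$m$-$n$. Your alternative case-split construction of $G$ is also valid for this lemma: it chooses between two fixed indices according to whether $\Phi(e)=e$, and needs only that those two functions have indices in $\{f_e\}$. However, the subsequent appeal to Kleene's recursion theorem in Theorem~\ref{thm:biased_fixed_point} still requires an acceptable numbering, so the case split does not remove that assumption from the overall argument.
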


\begin{proof}
Define the computable predicate
\[
P(e,x):=\bigl[\Phi(e)=e\bigr]\wedge\bigl[\exists n\in\N: x=x_0\#^n\bigr].
\]
Because $\Phi$ is total computable, $\Phi(e)$ is computable for each $e$;
membership in $\{x_0\#^n:n\in\N\}$ is decidable by a linear-time scan.
Hence the program
\[
Q(e,x):=\begin{cases}
y^+, & P(e,x)=1,\\
\epsilon, & P(e,x)=0,
\end{cases}
\]
is total computable in both arguments.
By the $s$-$m$-$n$ theorem, there is a total computable index map $G$ with
$f_{G(e)}(x)=Q(e,x)$ for all $e,x$.
\end{proof}

\begin{theorem}[Overspecified fixed point under conservative repair]
\label{thm:biased_fixed_point}
Assume $B_{\mathrm{bw}}$ is non-trivial (Lemma~\ref{lem:nontrivial_bw}).
For any total computable conservative repair operator $\Phi$,
there exists $e^\ast\in\N$ with
\begin{equation}
\label{eq:biased_fp}
\Phi(e^\ast)=e^\ast
\quad\text{and}\quad
B_{\mathrm{bw}}(f_{e^\ast})=1.
\end{equation}
\end{theorem}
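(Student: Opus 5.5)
The plan is to combine the self-referential gadget of Lemma~\ref{lem:gadget_family} with Kleene's recursion theorem, and then to argue by cases on whether $\Phi$ fixes the resulting index. Fix a total computable conservative $\Phi$ and the witnesses $x_0,s_0,y^+,\epsilon,\#$ from Lemma~\ref{lem:nontrivial_bw}. Lemma~\ref{lem:gadget_family} gives a total computable $G:\N\to\N$ with $f_{G(e)}$ equal to $y^+$ on the padded family $\{x_0\#^n\}$ exactly when $\Phi(e)=e$, and equal to $\epsilon$ otherwise. By Kleene's recursion theorem applied to $G$, I obtain an index $e^\ast$ with $f_{e^\ast}=f_{G(e^\ast)}$ extensionally. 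I will take this $e^\ast$ as the witness for~\eqref{eq:biased_fp}.

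Next I rule out the case $\Phi(e^\ast)\neq e^\ast$. In that case the definition of $G$ gives $f_{e^\ast}=f_{G(e^\ast)}\equiv\epsilon$. Since $\Vrel(\epsilon,s)=0$ for all $s$, exactly as for $f_2$ in Lemma~\ref{lem:nontrivial_bw}, we get $\valscore_{\mathrm{bw}}(x,f_{e^\ast}(x))=0$ for every $x$. Hence $B_{\mathrm{bw}}(f_{e^\ast})=0$. I invoke Lemma~\ref{lem:bw_extensional} here, because $f_{e^\ast}$ and $f_{G(e^\ast)}$ agree only extensionally, not as programs. Conservativeness (Definition~\ref{def:conservative}) then forces $\Phi(e^\ast)=e^\ast$, contradicting the case assumption. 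So $\Phi(e^\ast)=e^\ast$ must hold.

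With $\Phi(e^\ast)=e^\ast$, the gadget gives $f_{e^\ast}(x_0\#^n)=f_{G(e^\ast)}(x_0\#^n)=y^+$ for all $n$, in particular for $n=0$. The computation in Lemma~\ref{lem:nontrivial_bw} then yields $\valscore_{\mathrm{bw}}(x_0,y^+)\ge\Vrel(y^+,s_0)=1>0$. Hence $B_{\mathrm{bw}}(f_{e^\ast})=1$, which together with the fixed-point property establishes~\eqref{eq:biased_fp}.

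The main obstacle I expect is justifying the use of the recursion theorem in this setting. Kleene's theorem is stated for an acceptable numbering of all partial computable functions, whereas Definition~\ref{def:gen_class} enumerates only total pipelines. My approach is to take $\{f_e\}$ to be an acceptable numbering $\varphi_e$ and note that the fixed point lands on a total function automatically. Indeed, $f_{e^\ast}=f_{G(e^\ast)}$, and every $f_{G(e)}$ is total by Lemma~\ref{lem:gadget_family}, so $e^\ast$ is a legitimate pipeline index. I would also check that $\Phi$ and $B_{\mathrm{bw}}$ make sense on all of $\N$ under this convention. If the numbering of total functions is instead a bespoke one, I would require it to admit $s$-$m$-$n$ together with an effective translation to and from a standard numbering, and transfer the fixed point through that translation.
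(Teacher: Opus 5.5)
Your argument follows the paper's proof step for step: the gadget from Lemma~\ref{lem:gadget_family}, the recursion theorem applied to $G$, conservativeness to rule out $\Phi(e^\ast)\neq e^\ast$, and evaluation on the padded family. It also inherits a genuine flaw from that proof. The flaw sits in Lemma~\ref{lem:nontrivial_bw} and is reused in the paper's Claim~2. The inequality $\valscore_{\mathrm{bw}}(x_0,y^+)\ge\Vrel(y^+,s_0)$ does not follow: $\Vrel$ takes values in $\{-1,0,+1\}$, so other features $s\in\Smodel(x_0)\setminus\Warrant(x_0)$ can contribute $-1$ and cancel $s_0$.

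No cleverer argument closes this from the padding hypotheses alone. Take $\Smodel(x)=\{s_0,s_1\}$ and $\Warrant(x)=\emptyset$ for all $x$, with $\Vrel(y^+,s_0)=+1$, $\Vrel(y^+,s_1)=-1$ and $\Vrel=0$ elsewhere. Every hypothesis of Lemma~\ref{lem:nontrivial_bw} holds, yet $\valscore_{\mathrm{bw}}\equiv 0$. Hence $B_{\mathrm{bw}}\equiv 0$, conservativeness forces $\Phi(e)=e$ for all $e$, and no index satisfies~\eqref{eq:biased_fp}.

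The repair is to work from non-triviality itself rather than from the padding witnesses. Take total computable pipelines $f_1,f_2$ with $B_{\mathrm{bw}}(f_1)=1$ and $B_{\mathrm{bw}}(f_2)=0$. This must either be assumed outright, or secured by strengthening the lemma to $\valscore_{\mathrm{bw}}(x_0,y^+)>0$, e.g.\ by requiring $\Vrel(y^+,s)\ge 0$ for all $s\in\Smodel(x_0)\setminus\Warrant(x_0)$. Then obtain $G$ by $s$-$m$-$n$ from $Q(e,x)=f_1(x)$ if $\Phi(e)=e$ and $Q(e,x)=f_2(x)$ otherwise. Your two-case argument goes through verbatim, with the second case reading simply $f_{e^\ast}=f_1$. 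The padded clock is needed only for the halting reduction, not here.

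Your discussion of the numbering, by contrast, improves on the paper, which silently identifies the $f_e$ of Definition~\ref{def:gen_class} with an acceptable numbering $\varphi_e$. Suppose every index denotes a total pipeline. Then a conservative $\Phi$ with no overspecified fixed point satisfies $e\in\mathcal{B}_{\mathrm{bw}}\iff\Phi(e)\neq e$. So for such enumerations the theorem is equivalent to undecidability of $\mathcal{B}_{\mathrm{bw}}$, and it can fail for badly chosen ones. Working inside an acceptable numbering and checking that $e^\ast$ is a total index, as you propose, is the right convention.
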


\begin{proof}
Fix $\Phi$ and obtain $G$ from Lemma~\ref{lem:gadget_family}.
Apply Kleene's recursion theorem (Appendix~\ref{app:recursion}) to $G$:
there exists $e^\ast$ such that
\[
f_{e^\ast}=f_{G(e^\ast)}.
\]
By Lemma~\ref{lem:gadget_family}, for every $x$,
\[
f_{e^\ast}(x)=
\begin{cases}
y^+, & \Phi(e^\ast)=e^\ast\text{ and }x=x_0\#^n\text{ for some }n,\\
\epsilon, & \text{otherwise.}
\end{cases}
\]

\textbf{Claim 1:} $\Phi(e^\ast)=e^\ast$.
Assume toward contradiction $\Phi(e^\ast)\ne e^\ast$.
Then the display above gives $f_{e^\ast}(x)=\epsilon$ for all $x$,
so $B_{\mathrm{bw}}(f_{e^\ast})=0$.
By conservativeness (Definition~\ref{def:conservative}),
$\Phi(e^\ast)=e^\ast$, contradiction.
Hence Claim~1 holds.

\textbf{Claim 2:} $B_{\mathrm{bw}}(f_{e^\ast})=1$.
By Claim~1 and the gadget definition,
$f_{e^\ast}(x_0\#^n)=y^+$ for all $n\in\N$.
Using Lemma~\ref{lem:nontrivial_bw},
\[
\valscore_{\mathrm{bw}}(x_0\#^n,f_{e^\ast}(x_0\#^n))
=\sum_{s\in\Smodel(x_0\#^n)\setminus\Warrant(x_0\#^n)}\Vrel(y^+,s)
\ge \Vrel(y^+,s_0)=1>0.
\]
Therefore $B_{\mathrm{bw}}(f_{e^\ast})=1$.

Combining Claims~1 and~2 yields~\eqref{eq:biased_fp}.
\end{proof}

\begin{corollary}[No conservative total repair operator can be complete]
\label{cor:no_uniform_conservative}
No total computable repair operator can simultaneously satisfy conservativeness
(Definition~\ref{def:conservative}) and uniform elimination
(Definition~\ref{def:uniform_elimination}).
\end{corollary}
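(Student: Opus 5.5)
The plan is to argue by contradiction directly from Theorem~\ref{thm:biased_fixed_point}. Suppose $\Phi$ is a total computable repair operator that is both conservative (Definition~\ref{def:conservative}) and has uniform elimination (Definition~\ref{def:uniform_elimination}). We work under the standing hypothesis of this section, namely that $B_{\mathrm{bw}}$ is non-trivial in the sense of Lemma~\ref{lem:nontrivial_bw}, so the witnesses $x_0,s_0,y^+,\epsilon,\#$ are available. Without that hypothesis the corollary can fail: if $B_{\mathrm{bw}}$ vanishes identically, the identity map is trivially conservative and eliminating. We therefore state this dependence explicitly, exactly as the theorem does.

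First, we apply Theorem~\ref{thm:biased_fixed_point} to $\Phi$. Since $\Phi$ is total computable and conservative, the theorem yields an index $e^\ast$ with $\Phi(e^\ast)=e^\ast$ and $B_{\mathrm{bw}}(f_{e^\ast})=1$. Next, we instantiate uniform elimination at $e=e^\ast$, which gives $B_{\mathrm{bw}}(f_{\Phi(e^\ast)})=0$. Substituting the fixed-point equation $\Phi(e^\ast)=e^\ast$ turns this into $B_{\mathrm{bw}}(f_{e^\ast})=0$, which contradicts $B_{\mathrm{bw}}(f_{e^\ast})=1$. Hence no such $\Phi$ exists.

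There is essentially no technical obstacle, since all the recursion-theoretic work sits inside Theorem~\ref{thm:biased_fixed_point} and Lemma~\ref{lem:gadget_family}. The only point needing care is that uniform elimination is a statement about indices, whereas $B_{\mathrm{bw}}$ is evaluated on the function $f_{\Phi(e^\ast)}$. Because the fixed point is an equality of indices, not just extensional equality, the substitution is literal and we do not even need Lemma~\ref{lem:bw_extensional}. Had the recursion theorem only produced $f_{\Phi(e^\ast)}=f_{e^\ast}$, we would invoke Lemma~\ref{lem:bw_extensional} to transfer the value of $B_{\mathrm{bw}}$, and the argument would go through unchanged.
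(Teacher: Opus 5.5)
Your proof is correct and is exactly the paper's argument: apply Theorem~\ref{thm:biased_fixed_point} to obtain $e^\ast$ with $\Phi(e^\ast)=e^\ast$ and $B_{\mathrm{bw}}(f_{e^\ast})=1$, then instantiate uniform elimination at $e^\ast$ and substitute the index equality to reach a contradiction. Stating the non-triviality hypothesis of Lemma~\ref{lem:nontrivial_bw} explicitly is a worthwhile sharpening, since the paper's statement leaves it implicit and, as you note, the identity operator is a counterexample when $B_{\mathrm{bw}}\equiv 0$.
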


\begin{proof}
Assume $\Phi$ is total computable and conservative.
By Theorem~\ref{thm:biased_fixed_point}, there exists $e^\ast$ with
$\Phi(e^\ast)=e^\ast$ and $B_{\mathrm{bw}}(f_{e^\ast})=1$.
Hence
\[
B_{\mathrm{bw}}(f_{\Phi(e^\ast)})
=B_{\mathrm{bw}}(f_{e^\ast})
=1,
\]
which violates uniform elimination.
\end{proof}

\begin{remark}
\label{rem:where_holds_correction}
Theorem~\ref{thm:biased_fixed_point} is worst-case:
it constructs a specific adversarial selection pipeline evading \emph{any} conservative repair operator.
Practical repair heuristics can still reduce overspecification on typical workloads.
\end{remark}

\begin{remark}
\label{rem:where_fails_correction}
Dropping conservativeness breaks the theorem: $\Phi(e)=e_0$ (a constant baseline selector) has a single unoverspecified fixed point---but destroys all representation sensitivity.
\end{remark}

\subsection{The three-way algorithmic trade-off}
\label{sec:practical_conservativeness}

Combining Theorems~\ref{thm:rice_fixed},~\ref{thm:biased_fixed_point}, and Corollary~\ref{cor:no_uniform_conservative},
any repair algorithm for adaptive representation selection faces:
\begin{enumerate}[label=(\alph*),leftmargin=*]
\item \textbf{Abandon conservativeness.} Modify all pipelines, risking degradation of already evidence-aligned selectors.
\item \textbf{Abandon completeness.} Accept that some overspecified pipelines pass uncorrected.
\item \textbf{Restrict the domain.} Operate on finite instance families (Proposition~\ref{prop:finite_domain_decidable}), accepting exponential cost.
\end{enumerate}

Practical methods for algorithm selection and benchmark-driven tuning effectively choose (b)~\cite{rice1976,mcgeoch2012,kraska2018}.
Our results explain why this is the strongest uniform strategy available without abandoning either conservativeness or domain generality.

This trade-off is structurally distinct from data-structure lower bounds~\cite{fredman1989,holm2001}: those bounds constrain efficiency within decidable settings; our barrier constrains possibility on unbounded selector families.

\subsection{Connection to minimal repair}
\label{sec:cqa_connection}

Conservativeness (Definition~\ref{def:conservative}) is the selector analogue of \emph{minimality} in repair-based reasoning~\cite{arenas1999,bertossi2011}.
Theorem~\ref{thm:biased_fixed_point} says: under this discipline, selectors exist that are their own \enquote{repair} yet remain structurally overspecified.

\section{Conclusion}
\label{sec:conclusion}

Working under a model in which structural preferences propagate through both benchmark aggregation and pairwise score fitting (Propositions~\ref{prop:social_inheritance_profile}--\ref{prop:btl_inheritance}), we identified two algorithmic barriers to repair in adaptive data-structure selection:
\begin{enumerate}[label=(\roman*),leftmargin=*]
\item \textbf{Decidability boundary.} Overspecification detection is undecidable on unbounded domains but decidable at exponential cost on finite domains.
\item \textbf{Fixed-point barrier.} Conservative repair operators admit overspecified fixed points via Kleene's recursion theorem.
\end{enumerate}

These barriers are qualitatively different from classical lower bounds in data structures and algorithm analysis~\cite{fredman1989,holm2001}.
Those bounds constrain the \emph{efficiency} of supporting operations.
Our results constrain the \emph{possibility} of uniformly detecting and repairing unwarranted structural commitment across selector families.
The resulting three-way trade-off between conservativeness, completeness, and domain restriction provides a principled framework for evaluating any adaptive representation-selection pipeline in graph algorithms, string algorithms, and dynamic data structures.


\appendix
\setcounter{section}{0}
\renewcommand{\thesection}{\Alph{section}}

\section{Majority Linkage}
\label{app:majority}

\begin{proposition}[Strict majority implies decisive power]
\label{prop:majority_implies_decisive}
Under pairwise majority aggregation with odd $k$, any coalition $C\subseteq[k]$ with $|C|>\frac{k}{2}$ is decisive for every ordered pair $(y,y')$.
If every evaluator in $C$ is signature-monotone on $(y,y')$, Proposition~\ref{prop:social_inheritance_profile} yields inheritance.
\end{proposition}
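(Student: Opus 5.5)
The plan is to prove the two claims of Proposition~\ref{prop:majority_implies_decisive} in order: first that a strict-majority coalition is decisive, then that inheritance follows by invoking Proposition~\ref{prop:social_inheritance_profile}.

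\textbf{Decisiveness.} Under pairwise majority aggregation, $y\succ_A y'$ holds exactly when
\[
\bigl|\{i\in[k]: y\succ_i y'\}\bigr| > \bigl|\{i\in[k]: y'\succ_i y\}\bigr|.
\]
Each $\succ_i$ is a strict ranking, so every evaluator ranks one of $y,y'$ above the other, and for odd $k$ this comparison can never tie. Fix $C$ with $|C|>k/2$ and suppose $y\succ_i y'$ for every $i\in C$. Then at least $|C|$ evaluators prefer $y$. At most $k-|C|<k/2<|C|$ evaluators can prefer $y'$, and this holds however the evaluators outside $C$ vote. Therefore $y\succ_A y'$, which is the decisiveness condition of Definition~\ref{def:decisive_pair} for the ordered pair $(y,y')$. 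Because the pair was arbitrary, $C$ is decisive for every ordered pair.

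\textbf{Inheritance.} If in addition every $i\in C$ is signature-monotone on $(y,y')$, then $C$ is a decisive family whose members are all signature-monotone on that pair. These are exactly the hypotheses of Proposition~\ref{prop:social_inheritance_profile}. That proposition then gives, for every benchmark profile $\pi$, that $\valscore(x,y)>\valscore(x,y')$ implies $y\succ_A y'$.

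\textbf{Main obstacle.} There is no real obstacle; the only care needed is in the bookkeeping.
\begin{itemize}
\item We must assume strict individual rankings. Otherwise an evaluator could be indifferent between $y$ and $y'$, and "at most $k-|C|$ prefer $y'$" would still hold, so the argument survives either way.
\item Pairwise majority may produce cycles. This does not matter, because decisiveness concerns only the single pairwise relation $\succ_A$ on the pair $(y,y')$, not a global aggregate order.
\item The hypothesis $|Y_x|\ge 3$ of Proposition~\ref{prop:social_inheritance_profile} is inherited from that setting. We note it as an assumption carried along; it plays no role in this argument.
\end{itemize}
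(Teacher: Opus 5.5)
Your proof is correct and takes essentially the same approach as the paper: a strict-majority coalition unanimously preferring $y$ outvotes all remaining evaluators regardless of how they rank, which gives decisiveness, and then adding signature-monotonicity yields exactly the hypotheses of Proposition~\ref{prop:social_inheritance_profile}. Your side remarks on indifference, majority cycles, and the unused $|Y_x|\ge 3$ hypothesis are accurate but go beyond what the paper's argument needs.
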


\begin{proof}
Fix an instance $x$ and candidate set $Y_x$ with $|Y_x|\ge 3$.
Let $A$ be the pairwise majority rule.
Take any coalition $C\subseteq[k]$ with $|C|>\frac{k}{2}$.
Suppose every evaluator in $C$ ranks $y\succ_i y'$.
Then strictly more than half of all evaluators rank $y$ above $y'$.
By majority aggregation, the aggregate outcome satisfies $y\succ_A y'$ regardless of the rankings of evaluators outside $C$.
Hence $C$ is decisive for $(y,y')$.

If, in addition, every evaluator in $C$ is signature-monotone on $(y,y')$, then
$\valscore(x,y)>\valscore(x,y')$ implies $y\succ_i y'$ for all $i\in C$,
and decisiveness gives $y\succ_A y'$.
This is exactly Proposition~\ref{prop:social_inheritance_profile}.
\end{proof}

\section{Pairwise-Fit Consistency}
\label{app:btl_consistency}

\begin{proposition}[Well-specified Bradley--Terry consistency]
\label{prop:btl_well_spec}
Under correct specification and positive pair coverage,
the Bradley--Terry estimator recovers true score differences for all sufficiently large samples~\cite{bradley1952,hunter2004}.
\end{proposition}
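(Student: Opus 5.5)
I would prove Proposition~\ref{prop:btl_well_spec} by the standard route: show that maximum-likelihood estimation in the logistic pairwise model of Definition~\ref{def:logistic_likelihood} is consistent once a normalization is fixed. First, fix an instance $x$ and a finite candidate set $Y_x=\{y_1,\dots,y_m\}$. Let $\theta_j:=\Rreward(x,y_j)$ denote the true scores. Since the likelihood depends only on differences, I would impose the identifiability constraint $\theta_1=0$, or equivalently $\sum_j\theta_j=0. I would then read \enquote{positive pair coverage} as saying that the comparison graph on $Y_x$, with an edge $\{j,l\}$ whenever the pair is sampled with positive limiting frequency $\rho_{jl}>0$, is connected. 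Score differences along non-adjacent pairs are then determined by summing along paths.

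Second, I would write the normalized log-likelihood
\[
\ell_N(\theta)=\frac{1}{N}\sum_{t=1}^N\log\sigma\bigl(\theta_{a_t}-\theta_{b_t}\bigr),
\]
where comparison $t$ is won by $a_t$ over $b_t$. By the law of large numbers it converges pointwise to
\[
\ell(\theta)=\sum_{j<l}\rho_{jl}\bigl[p_{jl}\log\sigma(\theta_j-\theta_l)+(1-p_{jl})\log\sigma(\theta_l-\theta_j)\bigr],
\]
with $p_{jl}=\sigma(\theta^\star_j-\theta^\star_l)$ under correct specification. Each bracket is a cross-entropy, which is uniquely maximized when $\sigma(\theta_j-\theta_l)=p_{jl}$, that is, when $\theta_j-\theta_l=\theta^\star_j-\theta^\star_l$ (Gibbs' inequality together with injectivity of $\sigma$). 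Connectivity and the normalization then give a unique maximizer $\theta^\star$. Moreover, $\ell$ is strictly concave on the normalized subspace, because its Hessian is a weighted graph Laplacian with positive weights on a connected graph.

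Third, I would invoke the standard consistency theorem for M-estimators with concave objectives: pointwise convergence of concave functions implies uniform convergence on compacts, and a unique maximizer gives $\hat\theta_N\to\theta^\star$ in probability (almost surely under the strong law). Hunter's MM analysis~\cite{hunter2004} can be cited for existence of the MLE.

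The main obstacle is finite-sample existence of the MLE. If some candidate wins, or loses, every comparison within a cut of the graph, the likelihood has no finite maximizer. I would handle this with Zermelo's/Ford's condition: for every partition of $Y_x$, some item on each side beats some item on the other. Under positive coverage and $p_{jl}\in(0,1)$, this condition holds with probability tending to one, since each violating event has probability decaying exponentially in the number of comparisons. This is also how I would make precise \enquote{for all sufficiently large samples}: the statement then holds eventually almost surely, or with probability tending to one, rather than deterministically.
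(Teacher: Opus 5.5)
The paper gives no argument for this proposition; it is stated as a standard fact and discharged entirely by citation to \cite{bradley1952,hunter2004}. Your proposal is therefore a genuinely different route in that it actually proves the claim, and the proof is sound. The steps are:
\begin{enumerate}[label=(\arabic*),leftmargin=*]
\item identifiability from a normalization plus connectivity of the comparison graph with $\rho_{jl}>0$;
\item uniqueness of the population maximizer from Gibbs' inequality applied edge by edge;
\item strict concavity on the normalized subspace, since the Hessian is minus a positively weighted Laplacian of a connected graph;
\item the convexity lemma, which upgrades pointwise convergence of $\ell_N$ to locally uniform convergence;
\item Ford's condition (Hunter's standing assumption), with an exponentially small union bound over the finitely many partitions, giving existence of the MLE with probability tending to one.
\end{enumerate}
Your version buys precision the paper lacks. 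It pins down \enquote{positive pair coverage} and makes explicit that scores are identified only up to an additive constant. It also shows that \enquote{recovers true score differences for all sufficiently large samples} can only mean consistency, namely existence with probability tending to one and $\hat\theta_N\to\theta^\star$ in probability or almost surely. Exact recovery at a fixed finite sample size cannot hold in general. Your argument further supplies the consistency step itself. That step is standard M-estimation theory rather than what the cited works mainly provide, which is the model and the existence and MM computation of the MLE. The citation buys only brevity. Your formulation is also exactly what Corollary~\ref{cor:consistency_to_reward} consumes when it invokes $\widehat p\to p^\star$.
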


\begin{proposition}[Misspecified pairwise fitting]
\label{prop:btl_misspec}
Under misspecification, consistent $M$-estimators converge to the $\KL$ projection of the true pairwise law onto the fitted pairwise-comparison family.
\end{proposition}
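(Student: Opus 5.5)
The statement immediately above is Proposition~\ref{prop:btl_misspec}: under misspecification, consistent $M$-estimators converge to the $\KL$ projection of the true pairwise law onto the fitted family. I would prove it by the standard argument for misspecified maximum-likelihood (White-style), applied to the logistic pairwise model of Definition~\ref{def:logistic_likelihood}.

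\textbf{Setup.} Let $p^\star(x,y,y')$ be the true probability that $y\succ y'$ on the covered pairs, and let $\mu$ be the sampling distribution over $(x,y,y')$, assumed to have positive coverage. For each parameter (score function) $\Rreward$ in the fitted family, write $p_\Rreward=\sigma(\Rreward(x,y)-\Rreward(x,y'))$. The population log-likelihood is
\[
L(\Rreward)=\mathbb{E}_\mu\bigl[p^\star\log p_\Rreward+(1-p^\star)\log(1-p_\Rreward)\bigr].
\]
The first step is the identity $L(\Rreward)=-\mathbb{E}_\mu[\KL(p^\star\Vert p_\Rreward)]-\mathbb{E}_\mu[H(p^\star)]$, where $H$ is the binary entropy. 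The entropy term does not depend on $\Rreward$. Hence maximizers of $L$ are exactly minimizers of the expected Bernoulli $\KL$, i.e.\ the $\KL$ projection $\Rreward^\dagger$. This step is a one-line algebraic rearrangement.

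\textbf{Convergence.} Next I would show that the empirical criterion $L_m$ converges to $L$ uniformly on compact parameter sets (a uniform law of large numbers). Here $\log\sigma$ is Lipschitz, and with finitely many items per instance the parameter space is finite-dimensional once scores are centered to remove the additive non-identifiability. I would then invoke the argmax continuous-mapping theorem: if $L$ has a well-separated unique maximizer, $\widehat{\Rreward}\to\Rreward^\dagger$. For uniqueness I would use that $-\log\sigma$ is strictly convex in the score differences, so $-L$ is strictly convex on centered scores whenever the comparison graph is connected, which positive pair coverage gives.

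\textbf{Main obstacle.} The hardest step is compactness and existence of the projection. If some $p^\star\in\{0,1\}$, or a pair is perfectly separated, the maximizer escapes to infinite score differences and the projection lies only on the closure of the family. I would first try assuming $p^\star$ is bounded in $(0,1)$ on covered pairs. Under that assumption, coercivity of $-L$ confines the maximizer to a compact set, and consistency follows by the usual argument that an approximate maximizer eventually lies in that set. Otherwise, I would state convergence in the extended sense, namely that $p_{\widehat{\Rreward}}$ converges to the projected probabilities. Finally, in the well-specified case $p^\star=p_{\Rreward^\dagger}$, the projection has zero $\KL$, which recovers Proposition~\ref{prop:btl_well_spec} as a special case.
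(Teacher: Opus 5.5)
The paper does not prove Proposition~\ref{prop:btl_misspec}. It is listed in Appendix~\ref{app:btl_consistency} as a background fact from quasi-maximum-likelihood theory (Huber, White), with no argument. So your sketch supplies what the paper omits rather than competing with a proof of its own. Your route is the standard one, and it is sound under the hypotheses you add.

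The decomposition of the population log-likelihood as $-\mathbb{E}_\mu[\KL(p^\star\Vert p_{\Rreward})]$ minus an entropy term is correct. Because $\mu$ is shared by the true and fitted laws, this is the $\KL$ projection in the right direction, with the true law in the first slot. Your uniqueness argument is also fine. The per-pair loss $d\mapsto p^\star(-\log\sigma(d))+(1-p^\star)(-\log\sigma(-d))$ has second derivative $\sigma(d)(1-\sigma(d))>0$ for every $p^\star$. So the projected probabilities are unique whenever the projection exists, and connectivity upgrades this to uniqueness of centered scores.

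A few places need tightening rather than new ideas.

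\textbf{Dimension.} \enquote{Finitely many items per instance} does not make the parameter finite-dimensional, because instances range over $\Str$. You need either $\mu$ finitely supported on instances, or a parametric family $\Rreward_\theta$ with $\theta\in\R^d$. In the parametric case you also need the usual continuity and integrable-envelope conditions for your uniform law of large numbers.

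\textbf{Which estimator.} As worded, the proposition is true only for the likelihood-based $M$-estimator. A generic $M$-estimator converges to the minimizer of its own population criterion. That minimizer is the $\KL$ projection only when the criterion is the logistic log-likelihood, so say explicitly that this is the reading you prove.

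\textbf{A simpler consistency step.} Since $-L_m$ is convex, you can replace the compactness and argmax step with the convexity lemma. Pointwise convergence of convex criteria is automatically uniform on compacts. Combine this with strict convexity and coercivity of $-L$, which is where $p^\star\in(0,1)$ enters. You then get that an empirical maximizer exists with probability tending to one and converges to $\Rreward^\dagger$.

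\textbf{Boundary case.} Your fallback for $p^\star\in\{0,1\}$ would need its own existence and uniqueness argument on the closure of the family. It is not needed under your interior assumption.
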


\section{Kleene Fixed-Point Theorem}
\label{app:recursion}

\begin{lemma}[Self-application operator from $s$-$m$-$n$]
\label{lem:self_application_operator}
There exists a total computable function $q:\N\to\N$ such that for all $n,x\in\N$,
\[
\varphi_{q(n)}(x)\simeq\varphi_{\varphi_n(n)}(x),
\]
where $\simeq$ denotes equality of partial computable functions
(both sides undefined or both defined with equal value).
\end{lemma}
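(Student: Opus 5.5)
The construction is a self-application program, built with the $s$-$m$-$n$ theorem. Consider the two-argument partial computable function
\[
\psi(n,x)\simeq\varphi_{\varphi_n(n)}(x).
\]
It is computed by a universal machine. On input $(n,x)$, the machine first simulates $\varphi_n$ on input $n$. If that computation halts with some value $m$, it then simulates $\varphi_m$ on input $x$ and returns whatever that produces. If either simulation diverges, the whole computation diverges. This is the step to get right, since both levels of undefinedness must be respected.

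Because $\psi$ is partial computable, it has an index $p$, so $\varphi_p(\langle n,x\rangle)\simeq\psi(n,x)$ for all $n,x$. The $s$-$m$-$n$ theorem then gives a total computable, in fact primitive recursive, function $s^1_1$ with
\[
\varphi_{s^1_1(p,n)}(x)\simeq\varphi_p(\langle n,x\rangle)
\]
for all $n,x$. Set $q(n):=s^1_1(p,n)$. This $q$ is total computable because $p$ is fixed and $s^1_1$ is total.

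Verification is then a direct unfolding: $\varphi_{q(n)}(x)\simeq\psi(n,x)\simeq\varphi_{\varphi_n(n)}(x)$. Both sides are undefined when $\varphi_n(n)\uparrow$, and they agree otherwise.

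The only real obstacle is this definedness bookkeeping. It is essential that $q$ be total even when $\varphi_n(n)$ diverges, and this is exactly what $s$-$m$-$n$ provides. The index $q(n)$ simply names a program that postpones evaluating $\varphi_n(n)$ until it is run on an input. Totality of $q$ is what makes the later argument work: applying $G$ to $q(n)$ and choosing $v$ with $\varphi_v=G\circ q$ yields the fixed point $e^\ast=q(v)$ used in Theorem~\ref{thm:biased_fixed_point}.
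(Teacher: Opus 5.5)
Your proposal is correct and takes essentially the same approach as the paper: define the partial computable function $(n,x)\mapsto\varphi_{\varphi_n(n)}(x)$ by two-stage simulation, then apply the $s$-$m$-$n$ theorem to obtain a total computable $q$. Your explicit choice of an index $p$ and $q(n)=s^1_1(p,n)$ only spells out the $s$-$m$-$n$ step that the paper invokes directly.
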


\begin{proof}
Define a partial computable binary function $H:\N\times\N\rightharpoonup\N$ by
\[
H(n,x):=\varphi_{\varphi_n(n)}(x).
\]
Operationally: simulate $\varphi_n(n)$; if it halts with output $m$,
then simulate $\varphi_m(x)$ and return its output.
If either simulation diverges, $H(n,x)$ is undefined.
Thus $H$ is partial computable.

By the $s$-$m$-$n$ theorem, there exists a total computable function $q$ such that
\[
\forall n,x,\quad \varphi_{q(n)}(x)\simeq H(n,x)
=\varphi_{\varphi_n(n)}(x).
\]
This is exactly the stated property.
\end{proof}

\begin{theorem}[Kleene recursion theorem]
\label{thm:kleene_formal}
For every total computable $G:\N\to\N$,
there exists $e^\ast\in\N$ with $\varphi_{e^\ast}=\varphi_{G(e^\ast)}$.
\end{theorem}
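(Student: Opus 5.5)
The plan is the standard diagonal construction, using Lemma~\ref{lem:self_application_operator} as the main tool. Fix a total computable $G$. Consider the composite $G\circ q$, where $q$ is the self-application operator from Lemma~\ref{lem:self_application_operator}. Since both $G$ and $q$ are total computable, $G\circ q$ is total computable. Hence it has some index $v$ with $\varphi_v=G\circ q$, and in particular $\varphi_v$ is total.

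The candidate fixed point will be $e^\ast:=q(v)$. Applying Lemma~\ref{lem:self_application_operator} with $n=v$ gives
\[
\varphi_{q(v)}(x)\simeq\varphi_{\varphi_v(v)}(x)\quad\text{for all }x.
\]
Because $\varphi_v$ is total, $\varphi_v(v)$ is defined and equals $G(q(v))=G(e^\ast)$. Substituting, we get $\varphi_{e^\ast}(x)\simeq\varphi_{G(e^\ast)}(x)$ for every $x$. This is exactly the equality $\varphi_{e^\ast}=\varphi_{G(e^\ast)}$ of partial computable functions, as required.

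The only delicate point is the totality bookkeeping. The identity in Lemma~\ref{lem:self_application_operator} is an identity of partial functions, and it is useful only when $\varphi_n(n)$ converges. Choosing $n=v$ as an index of the \emph{total} function $G\circ q$ guarantees convergence, so I would state this explicitly. A second point needs a brief remark: the theorem is applied in Theorem~\ref{thm:biased_fixed_point} to the enumeration $\{f_e\}$ of total pipelines rather than to the acceptable numbering $\{\varphi_e\}$. I would note that the argument uses only the $s$-$m$-$n$ property and the existence of a universal function. It therefore transfers to any acceptable numbering. The gadget family of Lemma~\ref{lem:gadget_family} is interpreted within such a numbering, so $f_{e^\ast}=\varphi_{e^\ast}$ is total, because it equals the total function $f_{G(e^\ast)}$.
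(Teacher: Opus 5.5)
Your proof is correct and uses the same diagonal construction as the paper: the paper takes an index $p$ of $h=G\circ q$, sets $e^\ast:=q(p)$, and then runs the same chain of equalities through Lemma~\ref{lem:self_application_operator}, which is exactly your $v$ and $e^\ast=q(v)$. You also note explicitly that totality of $G\circ q$ guarantees convergence of $\varphi_v(v)$, and that the application to the pipeline enumeration $\{f_e\}$ must be read inside an acceptable numbering, since the recursion theorem can fail for effective enumerations consisting only of total functions; both are worthwhile points that the paper leaves implicit.
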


\begin{proof}
Fix total computable $G:\N\to\N$ and let
$q$ be the total computable map from Lemma~\ref{lem:self_application_operator}.

Define
\[
h(n):=G(q(n)).
\]
Since both $G$ and $q$ are total computable, $h$ is total computable.
By effective enumeration of partial computable functions,
there exists an index $p\in\N$ such that
\[
\forall n\in\N,\quad \varphi_p(n)=h(n)=G(q(n)).
\]

Now set
\[
e^\ast:=q(p).
\]
We verify that $\varphi_{e^\ast}=\varphi_{G(e^\ast)}$.
For arbitrary $x\in\N$,
\begin{align*}
\varphi_{e^\ast}(x)
&=\varphi_{q(p)}(x) \\
&\simeq \varphi_{\varphi_p(p)}(x)
&&\text{(Lemma~\ref{lem:self_application_operator})} \\
&=\varphi_{h(p)}(x)
&&\text{(definition of $p$)} \\
&=\varphi_{G(q(p))}(x)
&&\text{(definition of $h$)} \\
&=\varphi_{G(e^\ast)}(x).
\end{align*}
Because $x$ was arbitrary, $\varphi_{e^\ast}=\varphi_{G(e^\ast)}$.
\end{proof}

\section{Finite-Domain Detection Algorithm}
\label{app:finite_domain_algo}

\begin{algorithm}[H]
\caption{Finite-domain overspecification detection}
\label{alg:finite_bias}
\begin{algorithmic}[1]
\Procedure{DecideOverspecification}{$f,n$}
  \ForAll{$x\in\Sig^{\le n}$}
    \If{$\valscore_{\mathrm{bw}}(x,f(x)) > 0$}
      \Return $1$
    \EndIf
  \EndFor
  \Return $0$
\EndProcedure
\end{algorithmic}
\end{algorithm}


\begin{thebibliography}{99}

\bibitem{cormen2009}
Thomas~H. Cormen, Charles~E. Leiserson, Ronald~L. Rivest, and Clifford Stein.
\newblock \emph{Introduction to Algorithms}.
\newblock MIT Press, third edition, 2009.

\bibitem{rice1976}
John~R. Rice.
\newblock The algorithm selection problem.
\newblock \emph{Advances in Computers}, 15:65--118, 1976.

\bibitem{sleator1983}
Daniel~D. Sleator and Robert~E. Tarjan.
\newblock A data structure for dynamic trees.
\newblock \emph{Journal of Computer and System Sciences}, 26(3):362--391, 1983.

\bibitem{sleator1985}
Daniel~D. Sleator and Robert~E. Tarjan.
\newblock Self-adjusting binary search trees.
\newblock \emph{Journal of the ACM}, 32(3):652--686, 1985.

\bibitem{holm2001}
Jacob Holm, Kristian de~Lichtenberg, and Mikkel Thorup.
\newblock Poly-logarithmic deterministic fully-dynamic algorithms for connectivity, minimum spanning tree, 2-edge, and biconnectivity.
\newblock \emph{Journal of the ACM}, 48(4):723--760, 2001.

\bibitem{demetrescu2004}
Camil Demetrescu and Giuseppe~F. Italiano.
\newblock A new approach to dynamic all pairs shortest paths.
\newblock \emph{Journal of the ACM}, 51(6):968--992, 2004.

\bibitem{kmp1977}
Donald~E. Knuth, James~H. Morris, and Vaughan~R. Pratt.
\newblock Fast pattern matching in strings.
\newblock \emph{SIAM Journal on Computing}, 6(2):323--350, 1977.

\bibitem{manber1993}
Udi Manber and Gene Myers.
\newblock Suffix arrays: A new method for on-line string searches.
\newblock \emph{SIAM Journal on Computing}, 22(5):935--948, 1993.

\bibitem{ukkonen1995}
Esko Ukkonen.
\newblock On-line construction of suffix trees.
\newblock \emph{Algorithmica}, 14(3):249--260, 1995.

\bibitem{mcgeoch2012}
Catherine~C. McGeoch.
\newblock \emph{A Guide to Experimental Algorithmics}.
\newblock Cambridge University Press, 2012.

\bibitem{kraska2018}
Tim Kraska, Alex Beutel, Ed~H. Chi, Jeffrey Dean, and Neoklis Polyzotis.
\newblock The case for learned index structures.
\newblock In \emph{Proceedings of SIGMOD}, pages 489--504, 2018.

\bibitem{fredman1989}
Michael~L. Fredman and Michael~E. Saks.
\newblock The cell probe complexity of dynamic data structures.
\newblock In \emph{Proceedings of STOC}, pages 345--354, 1989.

\bibitem{bradley1952}
Ralph~A. Bradley and Milton~E. Terry.
\newblock Rank analysis of incomplete block designs: {I}. The method of paired comparisons.
\newblock \emph{Biometrika}, 39(3--4):324--345, 1952.

\bibitem{hunter2004}
David~R. Hunter.
\newblock MM algorithms for generalized Bradley--Terry models.
\newblock \emph{The Annals of Statistics}, 32(1):384--406, 2004.

\bibitem{arenas1999}
Marcelo Arenas, Leopoldo~E. Bertossi, and Jan Chomicki.
\newblock Consistent query answers in inconsistent databases.
\newblock In \emph{Proceedings of PODS}, pages 68--79, 1999.

\bibitem{bertossi2011}
Leopoldo~E. Bertossi.
\newblock \emph{Database Repairing and Consistent Query Answering}.
\newblock Morgan \& Claypool, 2011.

\bibitem{rice1953}
Henry~G. Rice.
\newblock Classes of recursively enumerable sets and their decision problems.
\newblock \emph{Transactions of the American Mathematical Society}, 74(2):358--366, 1953.

\bibitem{kahneman1979}
Daniel Kahneman and Amos Tversky.
\newblock Prospect theory: An analysis of decision under risk.
\newblock \emph{Econometrica}, 47(2):263--291, 1979.

\bibitem{tversky1992}
Amos Tversky and Daniel Kahneman.
\newblock Advances in prospect theory: Cumulative representation of uncertainty.
\newblock \emph{Journal of Risk and Uncertainty}, 5(4):297--323, 1992.

\end{thebibliography}
\end{document}